%% file: neurips_2026.tex
\documentclass{article}

\PassOptionsToPackage{numbers,sort&compress}{natbib}

\usepackage[preprint]{neurips_2026}

\usepackage[utf8]{inputenc} 
\usepackage[T1]{fontenc}    
\usepackage{hyperref}       
\usepackage{url}            
\usepackage{booktabs}       
\usepackage{amsfonts}       
\usepackage{nicefrac}       
\usepackage{microtype}      
\usepackage{xcolor}         
\usepackage{comment}
\usepackage{tabularray}
\usepackage{adjustbox}
\usepackage{url}
\usepackage{multirow}
\usepackage{multicol}
\usepackage{subcaption}
\usepackage{amsmath}
\usepackage{amssymb}

\newtheorem{definition}{Definition}

\newtheorem{lemma}{Lemma}
\title{Privacy Without Losing Place: A Paradigm for Private Retrieval in Spatial RAGs}

\author{%
  Kennedy Edemacu\thanks{The City University of New York, College of Staten Island and Graduate Center} \\
  The City University of New York\\
  New York, NY 10314 \\
  \texttt{kennedy.edemacu@csi.cuny.edu} \\
   \And
   Mohammad Mahdi Shokri\thanks{The City University of New York, Graduate Center} \\
   The City University of New York \\
   New York, NY 10016 \\
   \texttt{mshokri@gradcenter.cuny.edu} \\
   \AND
   Vinay M.~Shashidhar \\
   Northern Michigan
University \\
    Marquette, Mi 49855 \\
   \texttt{vmadanbh@nmu.edu} \\
   \And
   Jong Wook Kim \\
   Sangmyung University \\
   Seoul, Korea \\
   \texttt{jkim@smu.ac.kr} \\
}

\begin{document}

\maketitle

\begin{abstract}
  
  This work introduces PAS -- \textbf{P}rivacy \textbf{A}nchor \textbf{S}ubstitution, a structured mechanism for enabling user location privacy in spatial retrieval-augmented generation (RAG) systems. Unlike conventional differential privacy methods that directly perturb user locations, PAS represents location with relative anchor encoding consisting of an anchor, direction bin, and distance bin, allowing seamless integration with modern RAG pipelines. We evaluate PAS on a synthetic urban dataset and show that it achieves impressive coarse privacy guarantees, with approximately 370-400m adversarial location error, while retaining more than half of the baseline retrieval performance. Despite the slight drop in retrieval performance, the downstream generation quality under PAS remains comparatively robust, indicating that large language models can compensate for imperfect spatial retrieval. Furthermore, we provide empirical analysis showing that PAS exhibits non-monotonic privacy-utility relationship with respect to privacy parameters. We attribute this to geometric bias induced by anchor discretization, making it different from continuous noise mechanisms such as geo-indistinguishability. Our results show that structured spatial representations offer a practical approach to privacy in location based reasoning in RAG systems.
\end{abstract}

\section{Introduction}
The emergence of retrieval-augmented generation (RAG) has transformed the performance of large language models (LLMs), especially for knowledge-intensive applications
\cite{lewis2020retrieval, kandpal2023large, gao2023retrieval}.
LLMs are known to exhibit critical limitations in applications that require access to vast, specialized, or up-to-date sources of factual information. This inherent struggle stems from their static parametric memory and their inability to incorporate information beyond their context window \cite{roberts2020much, petroni2019language}. RAG directly addresses this limitation during inference by integrating relevant external knowledge dynamically 
retrieved from external sources \cite{zhu2024rageval, izacard2020leveraging, borgeaud2022improving}. Thus, significantly improving performance on tasks such as domain-specific question answering, fact-checking, and few-shot learning \cite{ram2023context, shi2023replug}.

Recent advances have extended this idea to geographical contexts, leading to the coinage of the term \textit{Spatial RAG}.
In Spatial RAG, retrieval is conditioned on user location and spatial metadata \cite{yu2025spatial, martins2025vision, jung2025intelligent, ni2025tp, zajac2025unifying, liu2025geospatial, schneider2025distrag, ruan2025retrieval}. 
The importance of this approach is far reaching, spanning multiple location-based application scenarios such as location-aware tourism, urban navigation and mobility guides, and disaster and emergency response systems. For example, by leveraging its ability to interpret natural language questions, a Spatial RAG system can accurately answer location-based queries such as \textit{what attraction features are near me within a reasonable walking distance?} and  \textit{find restaurants north of my current location within 1 mile range.}

However, current Spatial RAG studies \cite{yu2025spatial, martins2025vision, jung2025intelligent, ni2025tp, zajac2025unifying, liu2025geospatial, schneider2025distrag, ruan2025retrieval} assume access to the exact user location as a query feature. This creates a privacy bottleneck, as the system can expose fine-grained location coordinates of users to the retrieval system. Consequently, this introduces significant risks, including leakage, logging, and possible secondary use outside the scope of the intended application. Crucially, revealing the exact location coordinates can unintentionally expose sensitive and personal information, such as a user's home address, workplace, daily routines, or religious/medical affiliations, thus creating potential for surveillance, profiling, and targeted attacks. Established location privacy methods such as geo-indistinguishability, spatial cloaking, k-anonymity, and grid-based tokenization \cite{kim2022privacy, kim2021survey} are not tailored for RAGs and may suffer from a critical dichotomy. They either over-protect location data, thereby harming utility, owing to introduction of too much noise or overly large cloaks, or under-protect location data, resulting in the leakage of fine-grained geometrical information that adversaries can triangulate. For example, to protect location privacy in the query, \textit{find restaurants north of my current location within 1 mile range} using geo-indistinguishability, the direct injection of noise to the location coordinates shifts locations globally and may result in relevant point of interests falling outside the query radius.

A key insight for Spatial RAGs is that the retrieval process does not depend on exact location coordinates. Rather, it often relies on establishing semantic proximity, such as \textit{near a hospital} or \textit{within 1 km south of the university}. This finding opens the door for the development of task-aware privacy methods. Instead of perturbing the raw location coordinates, the system can release a customized spatial representation tailored for the retrieval task, thereby reducing the privacy risk. This work attempts to achieve this goal. To our knowledge, this effort represents the first attempt to protect location privacy in Spatial RAGs.

We propose a novel approach, \textbf{P}ublic \textbf{A}nchor \textbf{S}ubstitution (PAS), which is architected upon the aforementioned principle. 
PAS avoids the release of exact location coordinates by implementing a three-stage transformation. \textbf{1)} the user location is probabilistically mapped to a public anchor point (such as a landmark or transit hub). \textbf{2)} the spatial relationship between the true location and this anchor is encoded into coarse bins of distance and direction. And \textbf{3)} the system generates a query for the retrieval mechanism that exclusively leverages the resulting triplets (anchor, distance bin, direction bin), thereby successfully protecting the location information in the input as shown in Figure \ref{fig:pas_arch}. Thus, in summary, PAS provides the following benefits:
\begin{itemize}
    \item Anchors the privacy noise in semantically meaningful locations, successfully preventing unrealistic and unhelpful perturbed points.
    \item Produces features that are sufficient for effective retrieval and are aligned with the corpus indexing strategy, thus minimizing utility loss.
    \item Enables formal privacy guarantees via the exponential mechanism over public anchors, ensuring a coarse provable secure system. 
\end{itemize}

\begin{figure}
    \centering
    \includegraphics[width=0.6\textwidth]{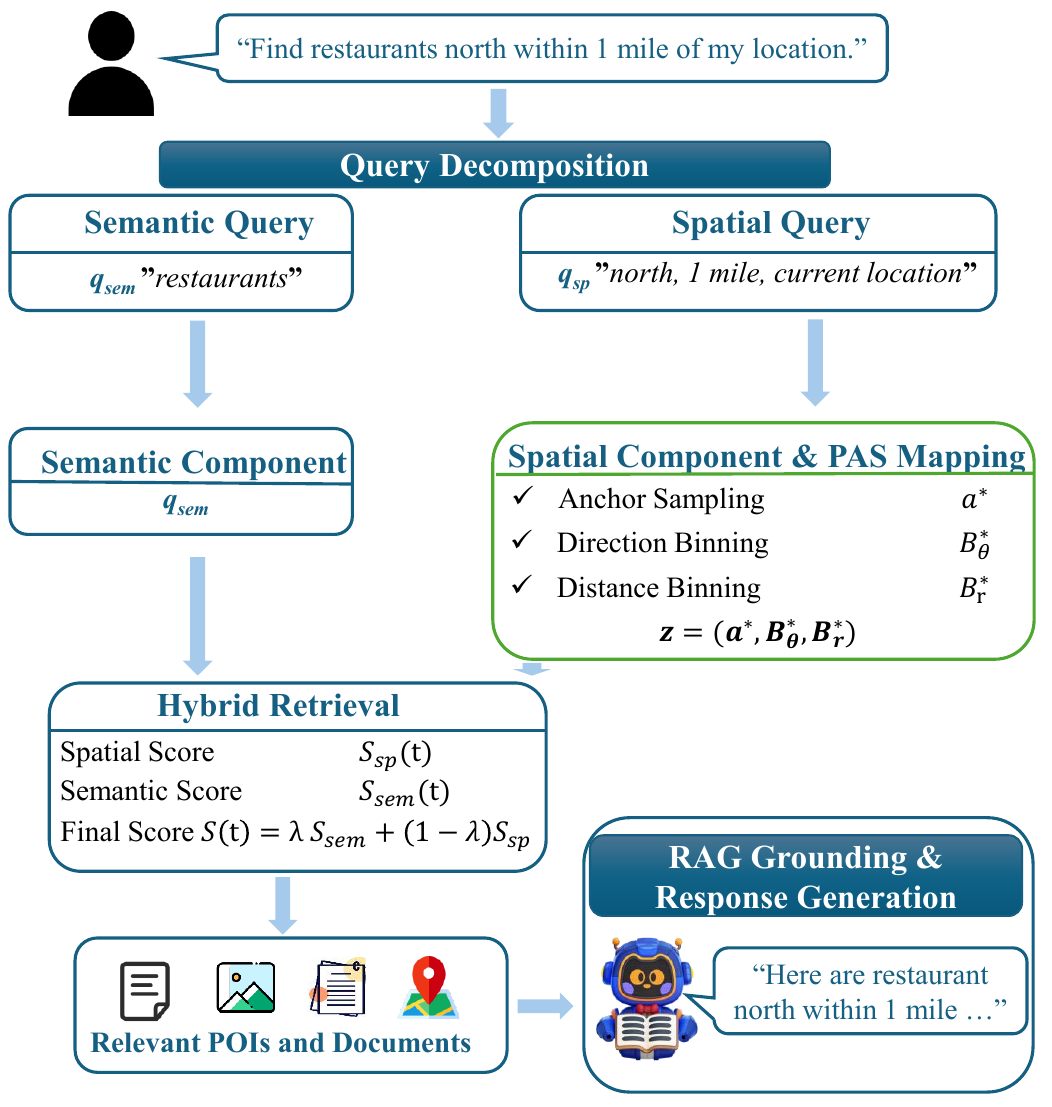}
    \caption{Illustration of PAS-based RAG framework. The main phases of the pipeline are: (1) query decomposition, where a user query is decomposed into semantic and spatial components. The spatial component undergoes a privacy mapping process, in which the true location is transformed into an obfuscated triplet (anchor, direction bin, distance bin). (2) the privatized spatial token is then combined with the semantic component to for a hybrid retrieval.  (3) the retrieved evidence is presented to RAG grounding model to the generate the final response.}
    \label{fig:pas_arch}
\end{figure}

\section{Methodology}\label{sec:methods}
We propose PAS, a location-obfuscation mechanism designed specifically for Spatial RAG tasks as shown in Figure \ref{fig:pas_arch}. Unlike traditional obfuscation techniques that release noisy location information, PAS maps a user's location in a semantically meaningful manner, referred to as \textit{anchor triplets}, which align with the corpus index while providing coarsely provable differential privacy guarantees.

\subsection{Problem Formalization}
\textbf{Spatial and Semantic Definitions}:
Given a user query, PAS first decomposes the query into spatial ($q_{sp}$) and semantic ($q_{sem}$) components. For example, for the query \textit{find restaurants north of my current location within 1 mile
range}, \textit{restaurant} is a semantic component, and \textit{current location}, \textit{north} and \textit{1 mile} form the spatial component. It is only the spatial part that gets privatized. PAS transforms the spatial part into (anchor, distance bin, direction bin) triplets. Formally, we let $\mathcal{U}$ represent a user's location space and define the privacy mapping components of PAS as follows.
\begin{itemize}
    \item Public Anchors ($A$): A finite set $A = \{a_1, \cdots, a_m\} \subset \mathcal{U}$ of publicly known landmarks such as campus buildings, transit hubs, and civic hubs.  
    \item Directional Partition ($B_{\theta}$): A finite partition of directional bearings, such as the 8 campus sectors. 
    \item Distance Partition ($B_{r}$): A finite partition of non-negative distances into discrete rings (e.g., $[0,500), [500, 2000), \cdots $ meters). 
\end{itemize}

\textbf{Corpus Indexing}:
We assume each document $t\in \mathcal{D}$ in a Spatial RAG corpus $\mathcal{D}$ is indexed by $\Phi(t) \in A \times B_{\theta} \times B_{r}$. In other words, each document in the corpus is tagged with the nearest or most relevant public anchor and coarse direction and distance bin tags relative to that anchor. This indexing strategy allows for pruning of relevant point of interests and ensures that retrieval is performed over the same discrete space as our obfuscated queries.

\subsection{PAS Privacy Mapping Mechanism}\label{sec:pas_mapping}
Given a true location $u\in \mathcal{U}$, a privacy parameter $\varepsilon > 0$, and a scaling factor $s$, for every location-based query, instead of exposing $u$, the $PAS_{\varepsilon, s}$ mechanism releases a sanitized query token $z = (a^*, B_{\theta}^{*}, B_{r}^{*}) \in A \times B_{\theta} \times B_{r}$ to the RAG for retrieval. The mechanism comprises two stages:

\textbf{(1) Probabilistic Anchor Selection}:
We take advantage of the Exponential Mechanism \cite{mcsherry2007mechanism} to probabilistically select a representative public anchor $a^* \in A$. The selection probability is proportional to the distance between the true location and the candidate anchor location, as shown in Eq. \ref{eq:selection}. 
\begin{equation}\label{eq:selection}
    \Pr(a^* \mid u) \propto \exp\left(-\varepsilon \frac{d(u, a^*)}{s}\right), \quad \forall a^* \in A
\end{equation}
This ensures that the true location is likely to be mapped to a nearby public anchor but maintains a strictly defined probability of being mapped to any public anchor in the set.

\textbf{(2) Deterministic Post-Processing}:
Once $a^*$ is identified and sampled, we generate the coarse semantics: distance and direction bins relative to the true location $u$.

\begin{enumerate}
    \item Directional Binning: Let $\theta(a^*, u)$ be the bearing from the true location $u$ to the sampled public anchor $a^*$. We set the directional bin $B_{\theta}^{*} =\text{\textit{dir\_bin}}(\theta(a^*, u)) \in B_{\theta}$.

    \item Distance Binning: Similarly, let $r(u, a^*) = d(u, a^*)$ be the distance between the true location $u$ and the sampled public anchor $a^*$. We set the distance bin as $B_{r}^{*} = \text{\textit{dist\_bin}}(r(u, a^*)) \in B_{r} $.
\end{enumerate}
The triplet $z = (a^*, B_{\theta}^{*}, B_{r}^{*})$ is what the RAG backend will gain access to, and the raw location $u$ never leaves the user's site. Most importantly, the token $z=(a^*, B_{\theta}^*, B_{r}^{*})$ defines an uncertainty region, i.e., the user lies in the set:
\begin{equation}
    U(z) = \{x: \theta(a^*,x)\in B_{\theta}^*, d(a^*,x)\in B_{r}^{*}\}
\end{equation}
For example, suppose $B_{\theta}^{*} = \text{north}$ and $B_{r}^{*}=\text{0-1mi}$, then $U(z)$ is:
\begin{equation*}
    U(z) = (\text{north wedge}) \cap (0-1 \text{mile ring}).
\end{equation*}
This allows the retriever to treat $u$ as unknown, but is constrained to $U(z)$.

\subsection{Point of Interest Retrieval}
To retrieve a point of interest (POI), $t$, we assume a distribution, $\pi(x|z)$ over possible user locations (latent user locations) inside the uncertainty region. Then:
\begin{equation}
    \mathcal{R}(z) =\text{top-n}_{t\in C(z)}S_{sp}(t|z)
\end{equation}
where:
\begin{equation*}S_{sp}(t|z) = \Pr_{x\sim \pi(.|z)}[d(x,t)\leq \text{R } \wedge \theta(x, t)\in B_q],
\end{equation*}
$C(z)$ is a candidate set, $R$ and $B_q$  are the distance and direction constraints specified in the query. This implies that we retrieve POIs with high probability of meeting the user-relative constraints.

In implementation, integration over all $x\in U(z)$ is challenging. Instead, we approximate with sampling using the Monte Carlo approximation \cite{metropolis1949monte}. We sample $K$ latent user points, $x_1,\cdots,x_K \in U(z)$, then we estimate:
\begin{equation}\label{eq:monte}
S_{sp}(t|z) = \frac{1}{K} \sum_{k=1}^{K} \mathbf{1}[d(x_k, t) \le R] \cdot \mathbf{1}[\theta(x_k, t)\in B_q]
\end{equation}
For textual relevance between the user query and the retrieved documents, we compute the semantic score for each retrieved item $t$. We define the semantic score, $S_{sem}(t)$ as a cosine similarity between $q_{sem}$ and $t$: 
\begin{equation}
    S_{sem}(t) = \text{cosine}(q_{sem},t)
\end{equation}
This formulation follows the standard dense retrieval in RAG pipelines.
We then perform a hybrid ranking for each POI, $t$ as:
\begin{equation}\label{eqn:hybrid_score}
    S(t)=\lambda S_{sem}(t) + (1-\lambda)S_{sp}(t|z)
\end{equation}
where $0\leq \lambda \leq 1$ is a weighting parameter. Any item $t$ whose final score $S(t)$ falls within the top-$k$, is selected for retrieval. 

To improve retrieval efficiency, we first employ the distance tags to generate a superset of plausible POIs before performing the above ranking. A good candidate set is defined as:
\begin{equation}
    C(z) = \{t: d(a^*,t)\leq R+r_{max}\}
\end{equation}
where $r_{max}$ is the maximum radius in $B_{r}$. This follows the principle of triangle inequality \cite{cormen2022introduction}. Since $d(u,a^*)\leq r_{max}$, any POI within the distance $R$ from the user lies within the distance $R+r_{max}$ from the anchor. Such a pruning can also be done with direction bins.

\subsection{Privacy Analysis}
We measure the privacy of PAS using the differential privacy framework of geo-indistinguishability (geo-DP). 

\begin{definition}[$\varepsilon$-geo-DP over Anchors]
A mechanism $M: \mathcal{U}\rightarrow A\times B_{\theta}\times B_{r}$ satisfies $\varepsilon$-geo-DP with scale $s$ if for all $u, u^{\prime} \in \mathcal{U}$ and all measurable $O\subseteq \text{supp}(M)$ \cite{andres2013geo}:
\begin{equation}
    \frac{\text{Pr}(M(u)\in O)}{\text{Pr}(M(u^{\prime})\in O)} \leq \text{exp}\big(\varepsilon \frac{d(u, u^{\prime})}{s}\big)
\end{equation}
\end{definition}

\begin{lemma}
    The mechanism $\text{PAS}_{\varepsilon,s}$ satisfies $\varepsilon\text{-geo-DP}$.
\end{lemma}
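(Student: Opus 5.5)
The plan is to follow the standard exponential-mechanism analysis for the anchor draw, and then pass the guarantee to the full triplet $z=(a^*,B_\theta^*,B_r^*)$ by the post-processing structure described in Section~\ref{sec:pas_mapping}.

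\textbf{Step 1: pointwise ratio for the anchor.} Write
\[
\Pr(a^*=a\mid u)=\frac{\exp(-\varepsilon d(u,a)/s)}{Z(u)}, \qquad Z(u)=\sum_{a'\in A}\exp\!\left(-\varepsilon d(u,a')/s\right).
\]
For $u,u'\in\mathcal{U}$, the triangle inequality $|d(u,a)-d(u',a)|\le d(u,u')$ bounds the ratio of the numerators by $\exp(\varepsilon d(u,u')/s)$. The same inequality applied termwise to $Z$ gives $Z(u')/Z(u)\le \exp(\varepsilon d(u,u')/s)$. Multiplying the two bounds gives the crude estimate $\exp(2\varepsilon d(u,u')/s)$.

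\textbf{Step 2: removing the factor of two.} To reach the stated $\varepsilon$ rather than $2\varepsilon$, I would adopt the usual exponential-mechanism convention: the quality function $-d(u,a)$ has sensitivity $d(u,u')$ with respect to the metric, and the exponent is calibrated at $\varepsilon/2$ per unit of that sensitivity. Under this convention the two factors of Step~1 each contribute $\exp(\tfrac{\varepsilon}{2}d(u,u')/s)$, so the scale $s$ in Eq.~\eqref{eq:selection} is read as absorbing the factor $2$. I would state this normalization explicitly, since it is the only place a constant enters. Summing the resulting pointwise bound over any $O\subseteq A$ then gives the set-level inequality of the definition for the anchor component.

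\textbf{Step 3: extending to the triplet (main obstacle).} The difficulty is that $B_\theta^*$ and $B_r^*$ are computed from $u$, not only from $a^*$, so the post-processing theorem does not apply verbatim. I would handle this using the corpus-indexing convention. The released bins are drawn from the finite, public partitions $B_\theta\times B_r$, which are fixed in the same discrete space in which documents are indexed. For any event $O\subseteq A\times B_\theta\times B_r$, I would then decompose
\[
\Pr(M(u)\in O)=\sum_{a}\Pr(a^*=a\mid u)\,\mathbf{1}\!\left[(a,\mathrm{bins}(u,a))\in O\right].
\]
The argument is to show that, for the pairs $u,u'$ relevant to the bound, the indicator terms agree. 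This holds when $u$ and $u'$ lie in the same cell $U(z)$, since then the bins coincide. When they lie in different cells, the bound is to be controlled through the coarse uncertainty region $U(z)$.

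If that route stalls, the fallback is to treat the bin computation as deterministic post-processing of $a^*$, as the paper frames it, and invoke post-processing invariance directly. This fallback requires that the bins be regarded as a function of the sampled anchor together with public data only. This step is where I expect the real work, and possibly an additional assumption, to be needed.
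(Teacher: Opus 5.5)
You have found the two weak points, but neither can be repaired, because the lemma is false for $\text{PAS}_{\varepsilon,s}$ as defined. The paper's own proof sketch (exponential mechanism plus post-processing invariance, i.e.\ your fallback) fails for exactly the reason you raise in Step~3. The bins $\text{\textit{dir\_bin}}(\theta(a^*,u))$ and $\text{\textit{dist\_bin}}(d(u,a^*))$ are deterministic functions of the secret $u$, which is why the paper can say the user lies in $U(z)$ with certainty. They are therefore not post-processing of $a^*$. Concretely, fix an anchor $a$ and take $u,u'$ on a common ray from $a$, so they share a direction bin $b$, with $d(a,u)=0.5\,\text{mi}-\eta$ and $d(a,u')=0.5\,\text{mi}+\eta$. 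For $z_0=(a,b,[0,0.5\,\text{mi}))$ you get $\Pr(M(u)=z_0)=\Pr(a^*=a\mid u)>0$. But $\Pr(M(u')=z_0)=0$, because with $a^*=a$ the ring of $u'$ is $[0.5,1)$, and any other anchor changes the first coordinate. Hence $\Pr(M(u)\in O)\le e^{\varepsilon d(u,u')/s}\Pr(M(u')\in O)$ fails for $O=\{z_0\}$, which lies in the support of $M$, however small $d(u,u')=2\eta$ is. No finite $\varepsilon$ would fix this, and sector boundaries behave the same way. So your decomposition cannot be completed. The definition quantifies over all pairs, so you cannot restrict to pairs sharing a cell, and no appeal to $U(z)$ bounds a ratio whose denominator is zero. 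The ``additional assumption'' you anticipate would have to be a change of mechanism, not a hypothesis.

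Step~2 is likewise a change of mechanism rather than a proof step. With Eq.~\eqref{eq:selection} as written, the factor $2$ of Step~1 is really there, and the paper's appeal to the ``standard exponential bound'' passes over it too. Take two anchors at positions $0$ and $L$ on a line, with $u=L-\delta$ and $u'=L$. Then $\Pr(a^*=0\mid u)/\Pr(a^*=0\mid u')=e^{2\varepsilon\delta/s}(1+e^{-\varepsilon L/s})/(1+e^{-\varepsilon(L-2\delta)/s})$. This exceeds $e^{\varepsilon\delta/s}$ for every $L>\delta$ and tends to $e^{2\varepsilon\delta/s}$ as $L\to\infty$. So your Steps~1--2 actually establish that the anchor alone is $2\varepsilon$-geo-DP at scale $s$, or equivalently $\varepsilon$-geo-DP once $s$ is replaced by $2s$ in Eq.~\eqref{eq:selection}. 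Suppose one read the support clause as the support of $M(u')$ only, a reading that tolerates outputs ruling out $u'$ with certainty. Even then the triplet inherits just this $2\varepsilon$ bound, so the lemma fails under every reading.

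To get a true statement you must randomize the bins as well. For example, draw the whole token from one exponential mechanism over the finite set $A\times B_\theta\times B_r$ with $\Pr(z\mid u)\propto\exp(-\varepsilon\, d(u,c_z)/(2s))$, for fixed public points $c_z\in U(z)$. Alternatively, use $\inf_{x\in U(z)}d(u,x)$ in place of $d(u,c_z)$. Both scores are $1$-Lipschitz in $u$, so your Step~1 computation applies verbatim to the full token and gives exactly the bound in the definition. The price is that $u\in U(z)$ no longer holds with certainty.
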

\textbf{Proof Sketch}: The sampling of $a^*$ follows the standard exponential mechanism where the score function $q(u,a^*) = -d(u,a^*)/s$ has the metric sensitivity $d(u,u^{\prime})/s$. By the standard exponential bound and the closure of differential privacy under post-processing, the subsequent deterministic computations of $B_{\theta}^{*}$ and $B_{r}^{*}$ preserve this bound.

\section{Experiment}\label{sec:experiments}
\subsection{Dataset}
To rigorously evaluate PAS, we construct a synthetic yet realistic Spatial RAG benchmark dataset based on New York City. The dataset is designed to simulate location-aware queries under privacy constraints. The dataset includes a diverse set of urban point-of-interests (POIs) such as restaurants, hotels, hospitals, parks, etc, reflecting diversity across the five boroughs of New York City. The dataset has three main components: anchors, chunks (POIs), and queries. \textbf{Anchors} contains a set of 30 spatial anchors distributed across the city. Each anchor is defined by a geographic coordinate and semantic identifiers such as neighborhood, name, etc. The anchors serve as a basis for spatial abstraction in PAS through probabilistic spatial representation. \textbf{Chunks} forms the corpus, which contains 1010 POIs. Each POI represents a retrievable document. Each POI includes semantic attributes (such as name, category, and description), a geographic coordinate, and multi-anchor tags. Multi-anchor tagging allows each chunk to be pre-indexed with spatial relations relative to at least two anchors. Each anchor tag specifies a direction bin and a distance bin relative to the anchor. This multi-anchor indexing enables the probabilistic spatial matching required to evaluate PAS, and it is vital for candidate pruning. \textbf{Queries} forms a set of 423 evaluation queries of varying complexity. Each query is equipped with a true location (for baseline evaluations only), spatial constraints (distance and direction bins), ground-truth POIs, and a semantic category. The dataset supports two retrieval modes: (i) the deterministic baseline retrieval, where the true user location is used, and (ii) PAS retrieval, where probabilistic tokens are generated for spatial representation. We used ChatGPT \cite{chatgpt2026} during the dataset construction, specifically to assist in generating components of synthetic queries and metadata. All generated content was reviewed and validated.


\subsection{Metrics}
We evaluate across three dimensions, with each dimension having its own metric(s) as follows.

\textbf{Retrieval Metrics}:
To assess the retrieval ability, we employ two metrics: \textit{Recall@k} and \textit{nDCG@k}. Recall@k measures the fraction of relevant POIs that appear in the top-k retrieved results. Meanwhile, nDCG@k \cite{wang2013theoretical} measures how well the retrieval identifies the relevant POIs and how high they are ranked amongst the retrieved items.

\textbf{Generation Metrics}:
We assess the generation quality using two metrics: \textit{F1 score} and \textit{Citation Correctness}. F1 score measures the overlap between the generated answer and the ground truth answer. We employ it to reflect if degradation in retrieval due to privacy integration propagates to the generation phase. Citation correctness (Citation) evaluates whether the generated answers are properly grounded in the retrieved context. 

\textbf{Security Metrics}:
We introduce a security metric, \textit{Expected Localization Error (ALE)}, which measures the expected distance between the true user location $u$ and an adversary's location estimate $\hat{u}$ given the PAS token $z$:
\begin{equation}
    \text{ALE}(z) = \mathbb{E}[d(u, \hat{u}) \mid z]
\end{equation}
where $\hat{u}$ is defined as the centroid of the uncertainty region $U(z)$, and it is computed using the Monte Carlo sampling approximation as:
\begin{equation}
    \hat{u} = \frac{1}{K} \sum_{k=1}^{K} x_k, \quad x_k \sim U(z)
\end{equation}
We use ALE to capture the expected error of an adversary attempting to infer the user's true location from the privatized spatial token.

\subsection{RAG Settings}
We evaluate our proposed PAS within the standard RAG pipeline, where a query is first decomposed into a semantic and a spatial component. The decomposed components are then used to retrieve relevant items from the grounded corpus before passing them to the generation model to generate the final answer. We employ the hybrid scoring approach in Eq. \ref{eqn:hybrid_score} during the retrieval. Unless stated otherwise, we use a $\lambda$ value of 0.8, giving priority to semantic relevance while incorporating spatial constraints.

For semantic retrieval, we perform dense vector retrieval with two widely used models: MiniLM (sentence-transformers/all-MiniLM-L6-v2) \cite{reimers-2019-sentence-bert} and bge-base (bge-base-en) \cite{xiao2023bge}. Spatial relevance is computed using the true location for the baseline (non-private). However, for PAS, the true location is replaced with the triplets of a sampled anchor, and discretized direction and distance bins relative to the sampled anchor. The triplets define an uncertainity region for the user location. Thus, spatial relevance is estimated over this region instead of a single point. For each query, we generate a set of latent user locations from the uncertainity region, and employ Monte Carlo sampling over the latent locations to estimate the spatial relevance for corpus items. The selected top-k items are based on the hybrid score comprising the semantic and spatial relevance scores. We use a top-5 in our experiments. 

To improve efficiency, we reduce the search space by first pruning the documents using anchor-based constraints. For a DP sampled anchor, we retain documents that lie within an expanded radius relative to the anchor. The retrieved documents (POIs) are presented to generation models to output the final answer. We evaluate with three generation models: Meta-Llama-3.3-70B \cite{llama3modelcard}, gemma-3-12b-it \cite{gemmateam2025gemma3technicalreport}, and gpt-oss-120b \cite{openai2025gptoss120bgptoss20bmodel}. For other settings, we computed distances using the Haversin formula \cite{inman1835navigation} and we discretize bearings into 8 direction bins. Further details are provide in Appendix \ref{app:experiment_setup}.

\section{Results}\label{sec:results}
\begin{table*}[t]
\centering
\small
\begin{tabular}{llcccccc}
\toprule
Retriever & Model & Setting & Recall@k & nDCG@k & F1 & Citation & ALE (m) \\
\midrule

\multirow{6}{*}{\textbf{bge-base}} 
& \multirow{2}{*}{Meta-Llama-3.3-70B} 
& Baseline & \textbf{0.8456} & \textbf{0.7778} & 0.2428 & \textbf{0.8017} & 0.00 \\
& 
& PAS ($\epsilon=1.0$) & 0.5260 & 0.4623 & 0.2577 & 0.5667 & 367.57 \\

& \multirow{2}{*}{gemma-3-12b-it} 
& Baseline & \textbf{0.8456} & \textbf{0.7778} & \textbf{0.4686} & 0.7380 & 0.00 \\
& 
& PAS ($\epsilon=1.0$) & 0.5190 & 0.4533 & 0.4168 & 0.5310 & \textbf{389.65} \\

& \multirow{2}{*}{gpt-oss-120b} 
& Baseline & \textbf{0.8456} & \textbf{0.7778} & 0.3977 & 0.7840 & 0.00 \\
& 
& PAS ($\epsilon=1.0$) & 0.4508 & 0.4270 & 0.3625 & 0.5567 & 370.61 \\

\midrule

\multirow{6}{*}{\textbf{MiniLM}} 
& \multirow{2}{*}{Meta-Llama-3.3-70B} 
& Baseline & \textbf{0.8719} & 0.8089 & 0.2726 & 0.7451 & 0.00 \\
& 
& PAS ($\epsilon=1.0$) & 0.5807 & 0.5163 & 0.2673 & 0.6559 & 372.98 \\

& \multirow{2}{*}{gemma-3-12b-it} 
& Baseline & \textbf{0.8719} & 0.8089 & \textbf{0.4779} & 0.6690 & 0.00 \\
& 
& PAS ($\epsilon=1.0$) & 0.5504 & 0.4896 & 0.4300 & 0.5112 & \textbf{397.88} \\

& \multirow{2}{*}{gpt-oss-120b} 
& Baseline & \textbf{0.8719} & \textbf{0.8091} & 0.3566 & \textbf{0.7589} & 0.00 \\
& 
& PAS ($\epsilon=1.0$) & 0.5648 & 0.5322 & 0.3657 & 0.6012 & 369.12 \\

\bottomrule
\end{tabular}
\caption{Comparison of PAS and baseline (non-private) performance across retrievers. PAS introduces strong privacy (ALE $\approx$ 370--400m) while reducing retrieval performance. Best values are \textbf{bold}.
}
\label{tab:pas_results}
\end{table*}

\begin{figure*}[htbp]
     \centering
     \begin{subfigure}[b]{0.48\textwidth}
         \centering
         \includegraphics[width=\textwidth]{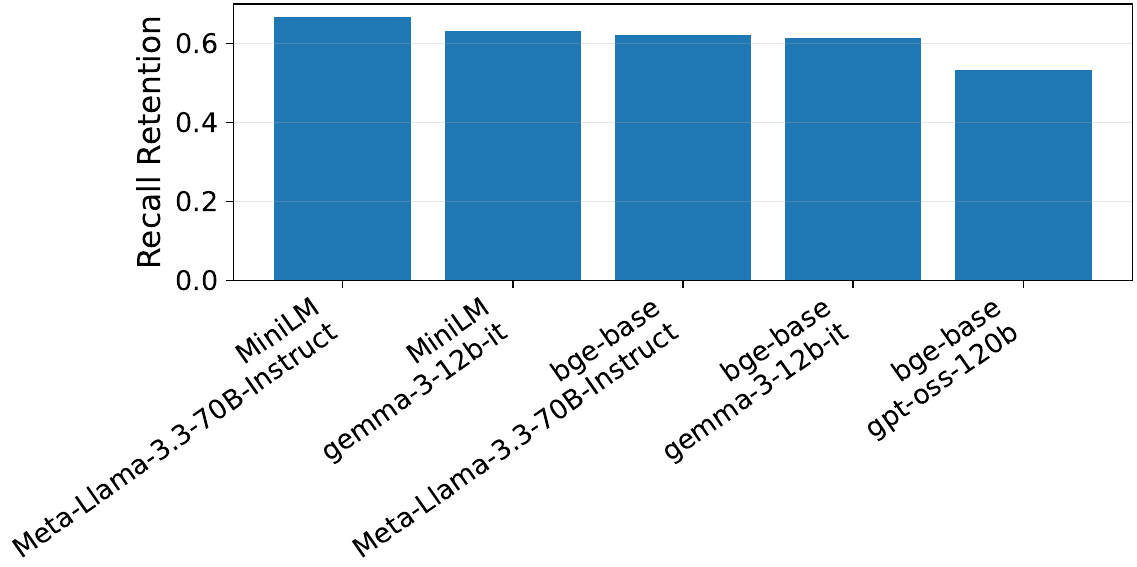}
         \caption{Recall@k Retention}
         \label{fig:ret_recall}
     \end{subfigure}
     \hfill
     \begin{subfigure}[b]{0.48\textwidth}
         \centering
         \includegraphics[width=\textwidth]{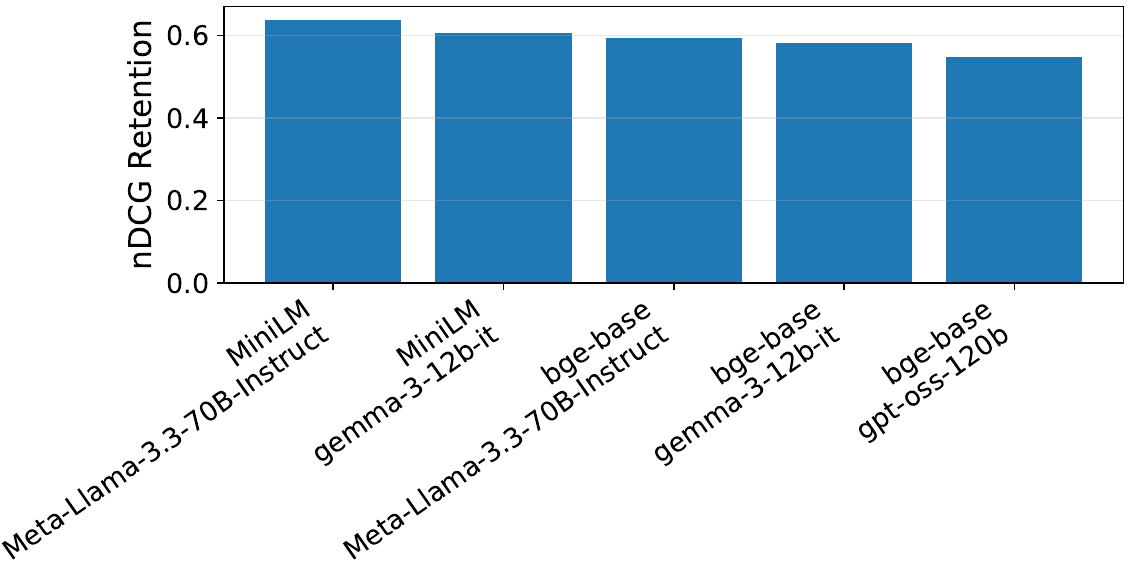}
         \caption{nDCG@k Retention}
         \label{fig:ret_ndcg}
     \end{subfigure}
     \hfill
     \begin{subfigure}[b]{0.48\textwidth}
         \centering
         \includegraphics[width=\textwidth]{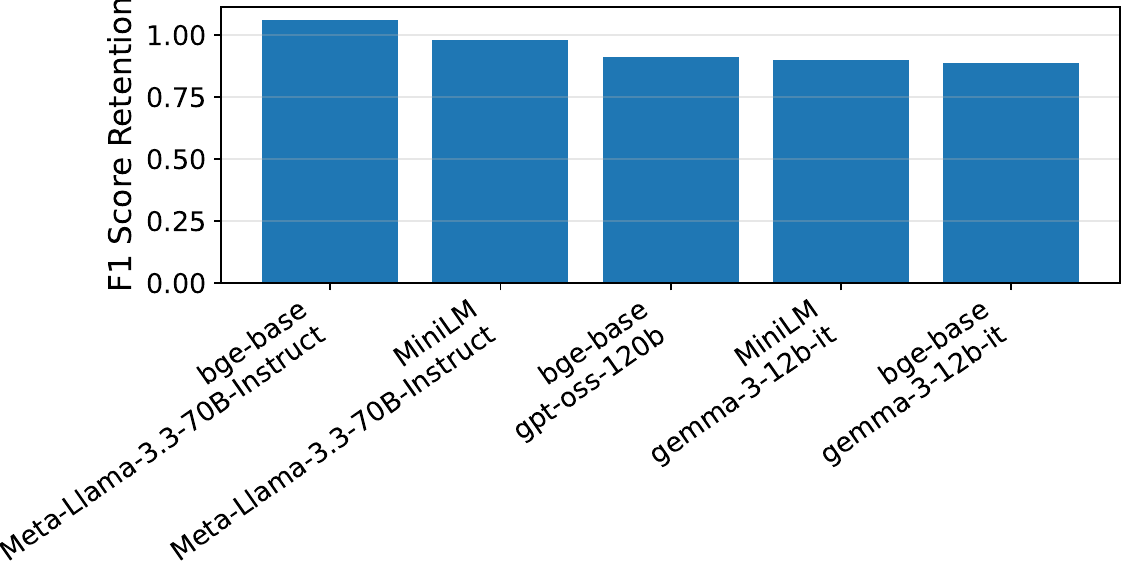}
         \caption{F1 Score Retention}
         \label{fig:ret_f1}
     \end{subfigure}
     \hfill
     \begin{subfigure}[b]{0.48\textwidth}
         \centering
         \includegraphics[width=\textwidth]{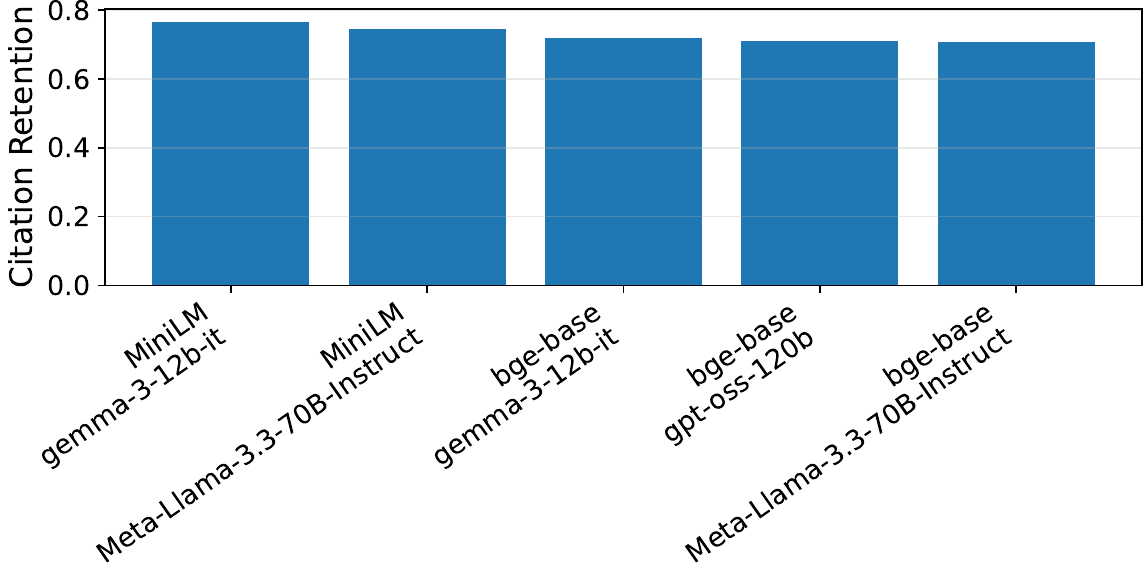}
         \caption{Citation Correctness Retention}
         \label{fig:ret_citation}
     \end{subfigure}
     \caption{Performance Retention under PAS}
     \label{fig:retention}
\end{figure*}

\begin{figure*}[htbp]
     \centering
     \begin{subfigure}[b]{0.48\textwidth}
         \centering
         \includegraphics[width=\textwidth]{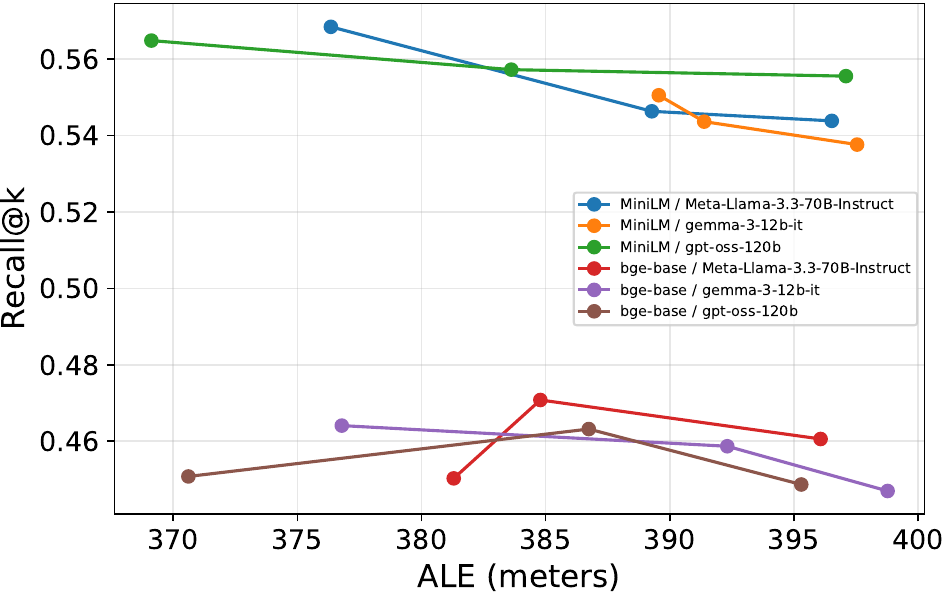}
         \caption{Privacy vs Recall}
         \label{fig:sub1}
     \end{subfigure}
     \hfill
     \begin{subfigure}[b]{0.48\textwidth}
         \centering
         \includegraphics[width=\textwidth]{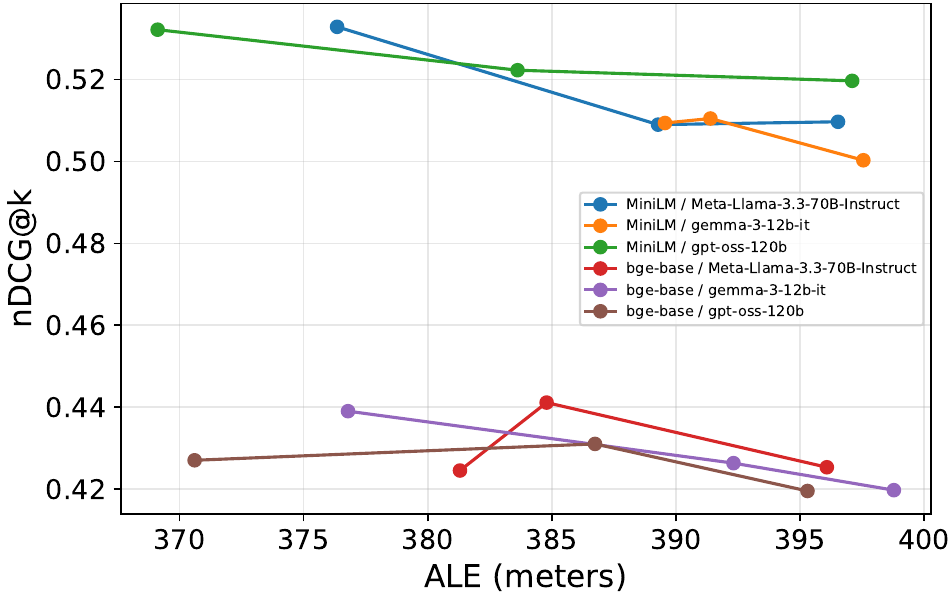}
         \caption{Privacy vs nDCG}
         \label{fig:sub2}
     \end{subfigure}
     \caption{Privacy-Utility Trade-off Curves for PAS}
     \label{fig:privacy-utility}
\end{figure*}

We evaluate the impact of PAS framework on retrieval and generation quality across two retrievers: bge-base and MiniLM, and three generation models: Meta-Llama-3.3-70B, gemma-3-12b-it and gpt-oss-120b. We summarize the results in comparison to the baseline (non-private) in Table \ref{tab:pas_results}, with additional graphical comparisons in Appendix \ref{app:pas_baseline}. In general, across all retriever-generation model configurations, PAS achieves strong privacy guarantees, with ALE consistently ranging between 367 and 398 meters, corresponding to neighborhood-level obfuscation. However, this coarse privacy guaranty comes at a measurable cost in retrieval performance. We further illustrate this through the privacy-utility trade-off curves presented in Figure \ref{fig:privacy-utility}. 

\subsection{Retrieval Performance Analysis}
We observe that increasing privacy (high ALE) results in decrease in retrieval performance. Specifically, from Table \ref{tab:pas_results} we observe that Recall@k decreases from $\sim$0.85-0.87 to $\sim$0.45-0.58, and nDCG@k drops from $\sim$0.78-0.81 to $\sim$0.43-0.53. Furthermore, we observe that the drop in retrieval performance is worse for bge-base compared to MiniLM, suggesting that MiniLM is more robust to spatial perturbations, possibly due to stronger semantic generalization. However, this degradation in retrieval performance is rather gradual than catastrophic, indicating that PAS preserves meaningful portion of spatial signals despite location obfuscation. To this effect, we report the retention of the retrieval performance in Figures \ref{fig:ret_recall} and \ref{fig:ret_ndcg}. Recall@k and nDCG@k retentions are measured as: $\frac{\text{PAS Recall@k}}{\text{Baseline Recall@k}}$ and $\frac{\text{PAS Recall@k}}{\text{Baseline Recall@k}}$, respectively. In Figure \ref{fig:ret_recall}, we observe that in most cases, PAS retains at least 60\% of the baseline Recall@k. Meanwhile in Figure \ref{fig:ret_ndcg}, we observe that PAS retains at least 55\% of the baseline nDCG@k. This shows that PAS maintains a substantial fraction of retrieval capabilities under privacy constraints.
Despite the drop in retrieval, the fairly high nDCG@k retention indicates that PAS preserves the relative ordering of the relevant documents even when some are missing. Primarily this is because PAS affects coverage rather than ranking fidelity. 

\subsection{Generation Performance Analysis}
Despite the decrease in retrieval performance, the generation quality remains fairly stable as shown in Table \ref{tab:pas_results}, and Figures \ref{fig:ret_f1} and \ref{fig:ret_citation}. F1 score exhibits moderate drop and even improves in some cases. For example, for the bge-Llama configuration, F1 slightly increases from 0.2428 to 0.2577. While for MiniLM-Llama configuration, F1 slightly drops from 0.2726 to 0.2673. This suggests that PAS noise may likely act as a regularizer, minimizing overfitting to precise spatial cues, thus encouraging reliance on semantic signals. Citation correctness exhibits degradation but remains largely at usable levels. Overall, the citation correctness drops from $\sim$0.74-0.80 in baseline to $\sim$0.51-0.66 in PAS. In general, PAS highly retains the generational performances of the baseline as shown in Figures \ref{fig:ret_f1} and \ref{fig:ret_citation}. This together with the retrieval vs generation results presented in Appendix \ref{app:retrieval_generation}, indicate that models can still produce quality answers even with imperfect retrieval. However, the overall final answer quality is dependent on the generation model used. Meta-Llama-3.3-70B achieves the most stable performance under PAS, while gemma-3-12b-it achieves the highest baseline F1 score but experiences the greatest degradation. This suggests that larger or better instruction-tuned models can compensate for the noisy retrieved inputs. 

\subsection{Privacy and Utility Sensivity to $\varepsilon$ and $\lambda$ Analysis}
\begin{figure*}[!t]
     \centering
     \begin{subfigure}[b]{0.48\textwidth}
         \centering
         \includegraphics[width=\textwidth]{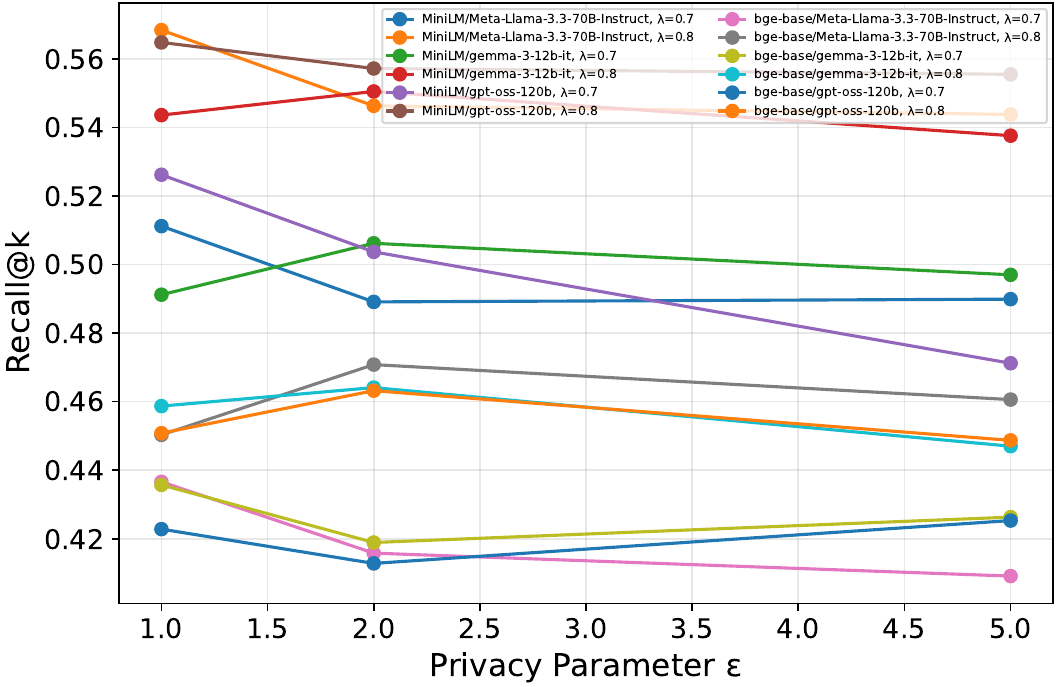}
         \caption{Recall Sensitivity to $\varepsilon$ and $\lambda$}
         \label{fig:recall-sensiti}
     \end{subfigure}
     \hfill
     \begin{subfigure}[b]{0.48\textwidth}
         \centering
         \includegraphics[width=\textwidth]{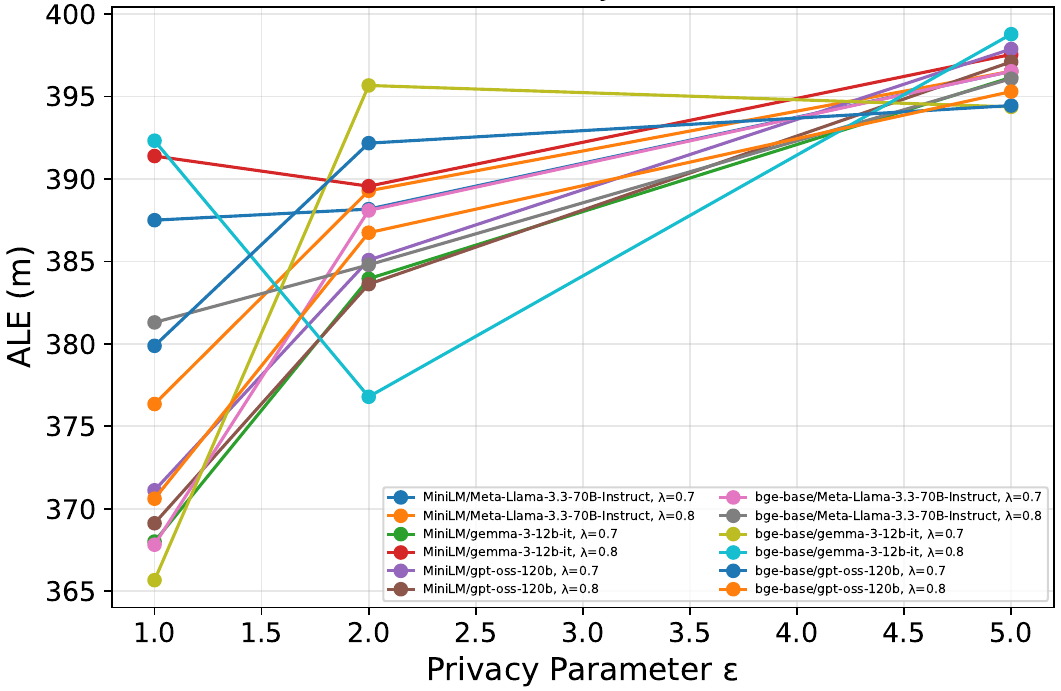}
         \caption{ALE Sensitivity to $\varepsilon$ and $\lambda$}
         \label{fig:ale-sensiti}
     \end{subfigure}
     \caption{Privacy and Utility Sensitivity to $\varepsilon$ and $\lambda$}
     \label{fig:privacy-utility-sensitivity}
\end{figure*}
We also report privacy and utility sensitivity to varying privacy parameters, as shown in Figure \ref{fig:privacy-utility-sensitivity}. Interestingly, we observe that an increase in $\varepsilon$ does not monotonically result in improved utility. Instead, we observe that Recall@k decreases or remains flat as $\varepsilon$ increases and ALE increases as $\varepsilon$ increases, which contradicts the typical differential privacy intuition. Unlike the traditional geo-indistinguishability mechanism, PAS demonstrates a non-monotonic relationship between $\varepsilon$ and utility. This is mainly due to anchor discretization and spatial encoding. Although larger $\varepsilon$ reduces noise in continuous differential privacy mechanisms, PAS operates under discrete anchor-relative representations. Thus, ALE is governed not only by the anchor distance, but also by the discretization error introduced by the direction and distance bins. In this case, the nearest anchor does not necessarily reduce the reconstruction error, thus increasing $\varepsilon$ may concentrate probability mass on anchors whose induced uncertainty region is geometrically biased (suboptimal anchors). This can increase ALE and reduce Recall@k, resulting in non-monotonic privacy-utility behavior in our experiments.  

In general, the results show a clear trade-off. PAS sacrifices utility to achieve impressive coarse location privacy. However, more than half of the retrieval signal and ranking quality are maintained, while the generation performance is stable. These findings suggest that PAS enables practical privacy-preserving Spatial RAG, where strong privacy guarantees can be achieved without horrendous degradation in downstream tasks.

\section{Related Work}\label{sec:related_work}
\textbf{Retrieval-Augmented Generation (RAG)}:
Due to the parametric nature of LLMs, RAG has emerged as the dominant paradigm for grounding LLMs on external, up-to-date, and factual knowledge sources. Early studies \cite{lewis2020retrieval, kandpal2023large, gao2023retrieval} have established improved LLM performance in terms of factual accuracy and interpretability under the paradigm. Subsequent efforts focus on improving retrieval quality through methods such as dense representations \cite{karpukhin2020dense} and late interaction model \cite{khattab2020colbert}. Furthermore, \cite{guu2020retrieval, izacard2021leveraging} integrate retrieval into pre-training and generation pipelines. The RAG frameworks are increasingly being adopted in diverse application domains such as healthcare \cite{amugongo2025retrieval}, agriculture \cite{kumar2024overcoming}, and finance \cite{setty2024improving}. Importantly, due to the consistent increase in the ability of LLMs to reason over various types of data, including spatial data, RAGs are being integrated into georgraphical application domains, leading to the term \textit{Spatial RAG}. A growing body of work \cite{yu2025spatial, martins2025vision, jung2025intelligent, ni2025tp, zajac2025unifying, liu2025geospatial, schneider2025distrag, ruan2025retrieval} explores various  strategies to improve spatial RAG in various geographical application domains. Despite these advances, these works assume access to the users' true location, a vulnerability that presents significant risks and exposes users to location-based inference attacks.  
In the literature, several works have been proposed to protect user locations. Geo-indistinguishability \cite{andres2013geo} extends the differential privacy \cite{dwork2006calibrating, dwork2014algorithmic, mcsherry2007mechanism} to location information by adding noise to geographical coordinates. Other location protection approaches are spatial cloaking \cite{chow2011spatial, gruteser2003anonymous}  and k-anonymity \cite{gedik2005location, kim2021survey, kim2022privacy} that aim to protect user locations by expanding spatial regions. A more recent work \cite{lan2020trajectory} protects sequences of user movements instead of single points. However, these methods are not tailored for RAGs and are likely to degrade utility in the downstream tasks, due to precise spatial constraints associated with Spatial RAGs. 

\textbf{Privacy in Retrieval and LLM Systems}:
Recent research advances \cite{zeng2024good, edemacu2025privacy} have begun to address privacy concerns in LLM-based systems. Zhang et al. \cite{zhang2026privacy} proposes a framework that operates at the entity and relation granularity level using knowledge graph representations to protect the privacy of the retrieved documents during the generation phase. \cite{zeng2025mitigating} proposes using synthetic data as an alternative to sensitive KB data in RAGs. Duan et al. \cite{duan2024privacy} uses an ensembling approach by aggregating different versions of a prompted model to hide data. A significant portion of these studies have specifically focused on protecting sensitive user inputs. Carey et al. \cite{carey2024dp} privatizes tabular data with differential privacy before serializing them into a prompt. Liu et al. \cite{liu2026risk} employs differential privacy for token- and sentence-level sanitization on sensitive and labeled data in prompts. Ngong et al. \cite{ngong2025protecting} uses a locally deployable framework between users and LLM to identify and reformulate sensitive information in the user prompt before being sent to the LLM. However, all of these studies do not address spatial privacy.

Our proposed PAS framework bridges previously disconnected areas. Unlike previous approaches, the PAS framework introduces a structured probabilistic representation that enables retrieval under uncertainty. It integrates differential privacy with Spatial RAG, allowing retrieval without access to the true user location coordinates. By combining probabilistic anchor-based representations and hybrid retrieval, PAS enables privacy-preserving spatial reasoning while maintaining compatibility with Spatial RAG pipelines. To our knowledge, this is the first work to address privacy in Spatial RAGs.

\section{Conclusion}\label{sec:conclusion}
This work presented PAS, a structured mechanism to preserve location privacy in spatial retrieval within RAG systems. Unlike the conventional differential privacy based approaches that inject continuous noise, PAS encodes user location as a discrete anchor-relative representation comprising an anchor, and direction and distance bins. This design enables seamless integration into modern RAG pipelines while providing coarsely guaranteed location privacy. Through a comprehensive evaluation on a synthetic urban dataset, we showed that PAS achieves impressive location obfuscation while retaining over half of the baseline retrieval performance and a stable generation quality. A key insight from our analysis is that PAS exhibits a non-monotonic privacy-utility relationship with respect to the varying privacy parameter. Empirically, we showed that this behavior is due to geometrical biases induced by anchor discretization and binning, instead of purely stochastic noise. In general, our results demonstrate that PAS provides ingredients for a practical spatial privacy in RAG systems. 

\textbf{Limitations}: We acknowledge that this work has few limitations that we shall aim to address in the future. (i) PAS relies on anchor-based discretizations, which introduce geometric bias that can move the inferred region away from the user's true location in a structured manner. This can lead to the non-monotonic behavior discussed earlier, and it may affect robustness in sparse or irregular anchor regions. (ii) the use of Monte Carlo sampling for estimating the spatial relevance introduces computational overhead and approximation variance. Although suitable for lab experiments, it may not scale well for larger datasets and real-time environments. (iii) we evaluate PAS on synthetic urban data, even though it is controlled and diverse, it may not fully reflect the real-world distributions and noisy metadata. Thus, further  validations are required to generalize the results for production-scale services.

\medskip

{
\small
\bibliographystyle{IEEEtran}
\bibliography{ref.bib}



}


\appendix
\section*{Appendices}

\section{PAS vs Baseline Graphical Comparison}\label{app:pas_baseline}
We also provide a graphical retrieval and generation performance comparison between the baseline and PAS mechanism as presented in \ref{fig:graph_compare}. We observe that PAS induces drop in retrieval and generation performance. However, the drop is more for retrieval compared to downstream generation. This visually confirms the utility results presented in Table \ref{tab:pas_results}. 
\begin{figure*}[htbp]
     \centering
     \begin{subfigure}[b]{0.48\textwidth}
         \centering
         \includegraphics[width=\textwidth]{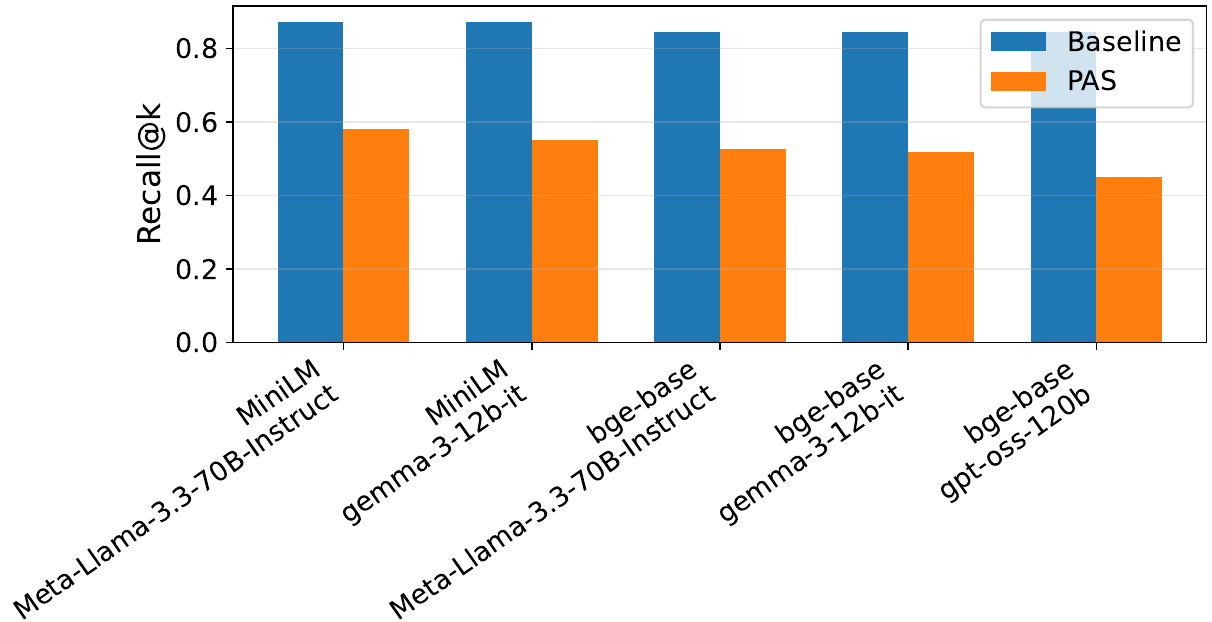}
         \caption{Recall}
         \label{fig:sub1}
     \end{subfigure}
     \hfill
     \begin{subfigure}[b]{0.48\textwidth}
         \centering
         \includegraphics[width=\textwidth]{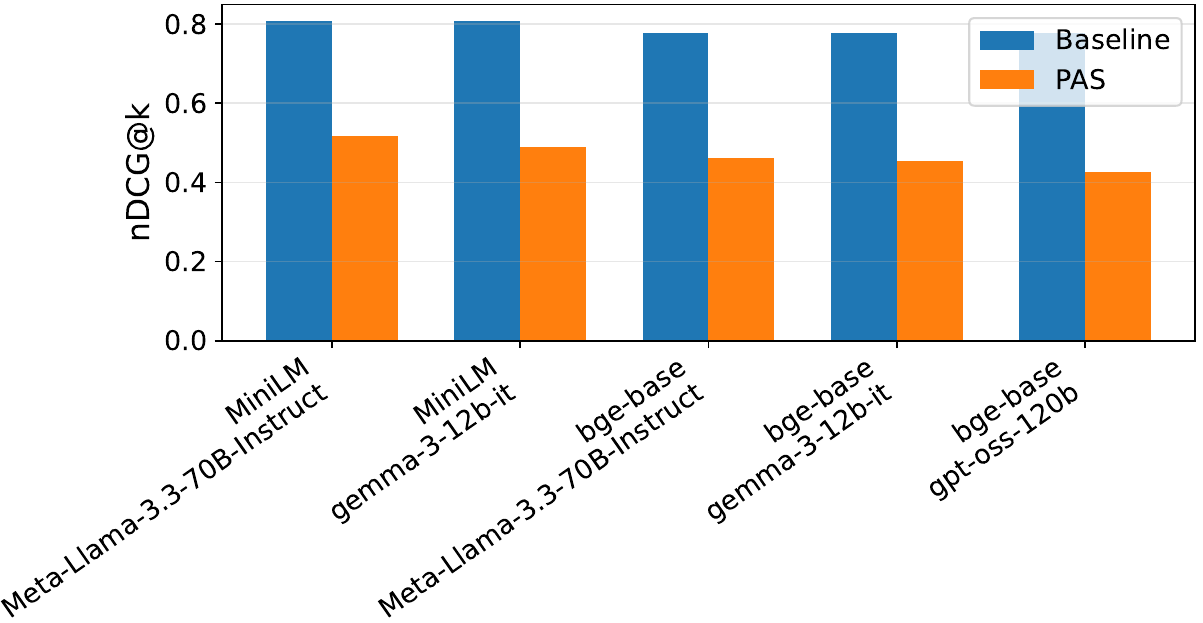}
         \caption{nDCG}
         \label{fig:sub2}
     \end{subfigure}
     \hfill
     \begin{subfigure}[b]{0.48\textwidth}
         \centering
         \includegraphics[width=\textwidth]{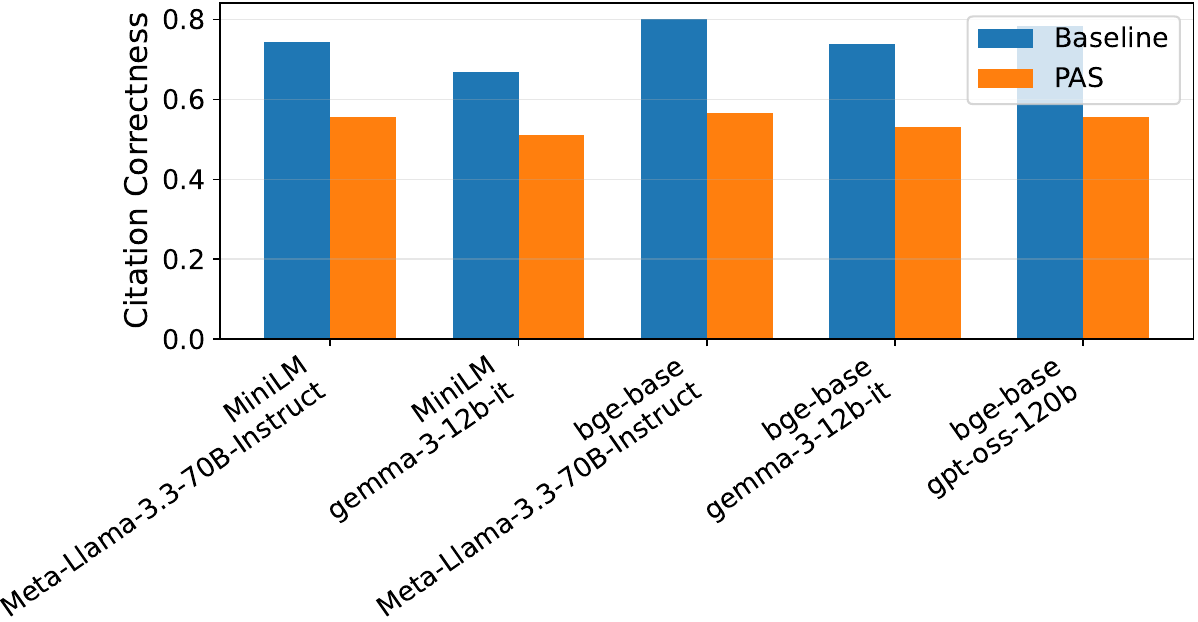}
         \caption{Citation}
         \label{fig:sub2}
     \end{subfigure}
     \hfill
     \begin{subfigure}[b]{0.48\textwidth}
         \centering
         \includegraphics[width=\textwidth]{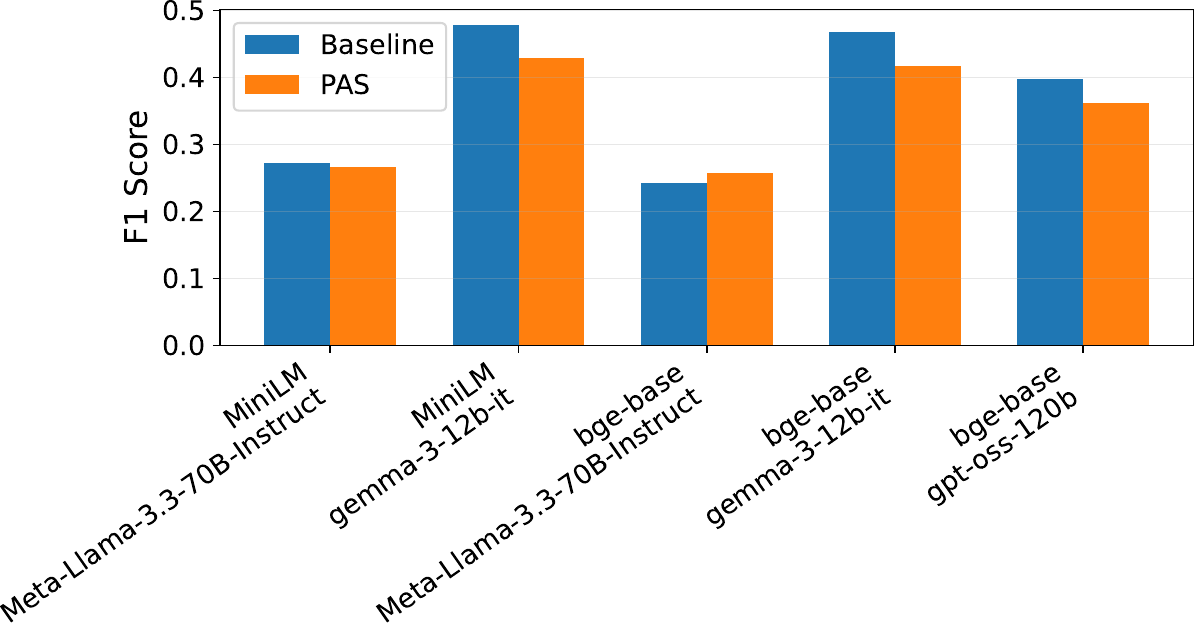}
         \caption{F1 Score}
         \label{fig:sub2}
     \end{subfigure}
     \caption{PAS Performance Comparison with the Baseline}
     \label{fig:graph_compare}
\end{figure*}

\section{Retrieval vs Generation}\label{app:retrieval_generation} 
Similarly, we evaluate how retrieval affects generation. We perform this by computing generation metrics (citation correctness and F1 score) under varying retrieval metrics (Recall@k and nDCG@k) for different retriever-generation model configurations as presented in Figure \ref{fig:retrieve_gen}. In general, we observe that the bge-base retriever performs worse in comparison to MiniLM, and the retrieval performance has slight impact on citation correctness but close to negligible performance on F1 scores. This confirms our earlier claim that the model can still produce quality results with imperfect retrievals.
\begin{figure*}[htbp]
     \centering
     \begin{subfigure}[b]{0.48\textwidth}
         \centering
         \includegraphics[width=\textwidth]{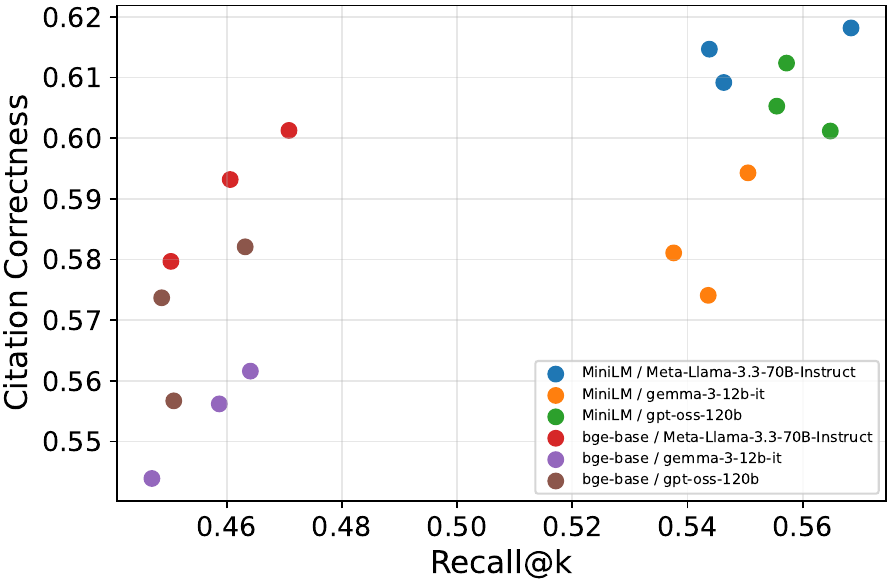}
         \caption{Recall vs Citation}
         \label{fig:sub1}
     \end{subfigure}
     \hfill
     \begin{subfigure}[b]{0.48\textwidth}
         \centering
         \includegraphics[width=\textwidth]{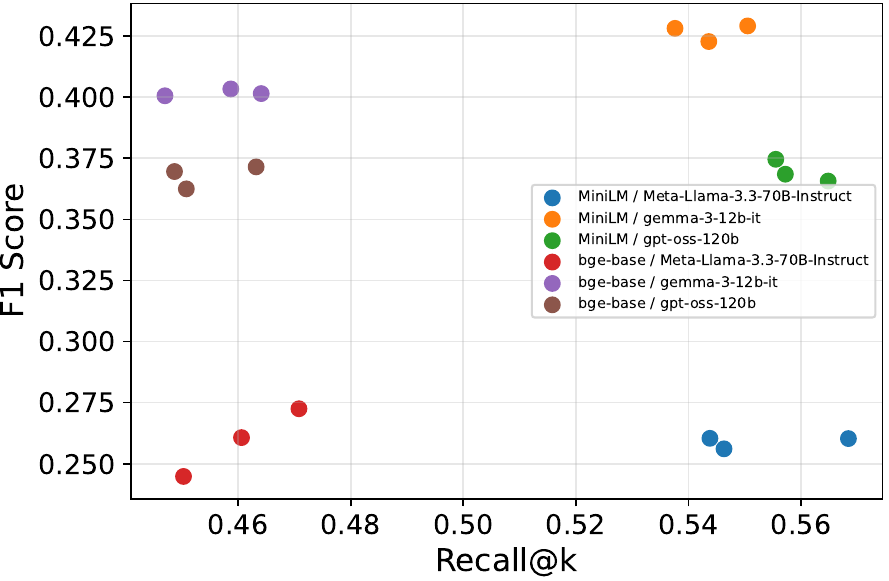}
         \caption{Recall vs F1}
         \label{fig:sub2}
     \end{subfigure}
     \hfill
     \begin{subfigure}[b]{0.48\textwidth}
         \centering
         \includegraphics[width=\textwidth]{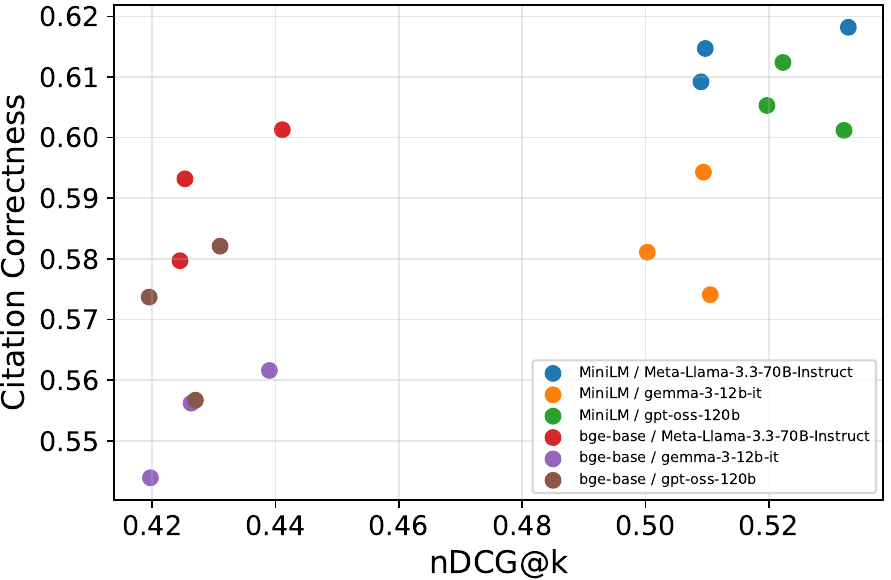}
         \caption{nDCG vs Citation}
         \label{fig:sub2}
     \end{subfigure}
     \hfill
     \begin{subfigure}[b]{0.48\textwidth}
         \centering
         \includegraphics[width=\textwidth]{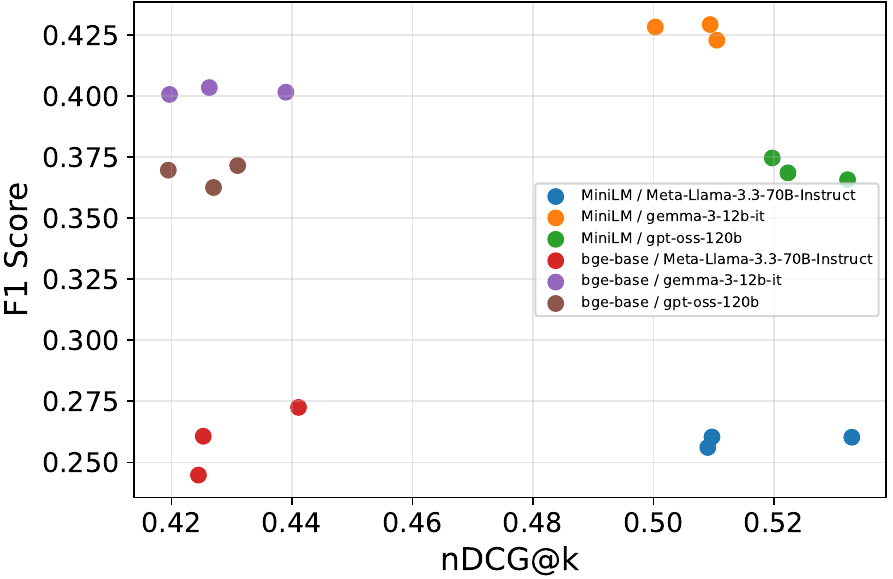}
         \caption{nDCG vs F1}
         \label{fig:sub2}
     \end{subfigure}
     \caption{Retrieval vs Generation Performance}
     \label{fig:retrieve_gen}
\end{figure*}

\section{Experimental Setup}\label{app:experiment_setup}
In this section, we present additional details on our experimental setup. For reproducibility, we will release our implementation codes and dataset through the GitHub repository once the work is accepted for publication.

\subsection{Retriever Configuration}

We employ dense retrieval models to capture semantic relevance between queries and candidate documents. Specifically, we use two widely adopted embedding models: \texttt{sentence-transformers/all-MiniLM-L6-v2}~\cite{reimers2019sentencebert} 
and \texttt{BAAI/bge-base-en-v1.5}~\cite{xiao2023bge}. For each query--document pair, semantic similarity is computed using cosine similarity in the embedding space.

To incorporate spatial relevance, we use the hybrid scoring function defined in Equation~\ref{eqn:hybrid_score}, which combines semantic and spatial signals. In the main experiments, we retrieve the top-\(k=5\) candidates based on this combined score. 

\subsection{Privacy-Aware Spatial (PAS) Mechanism}

We implement the PAS mechanism described in Section~\ref{sec:pas_mapping} by instantiating the probabilistic anchor selection in Eq.~\ref{eq:selection} and the resulting uncertainty region \(U(z)\). In practice, this mechanism is realized through discrete anchor sets and binning functions for direction and distance, as defined earlier.

For all experiments, we use a fixed set of public anchors. The exponential mechanism in Equation~\ref{eq:selection} is applied with \(\varepsilon \in \{1.0, 2.0, 5.0\}\) and a scaling factor \(\texttt{s} = 500\).

Directional and distance discretization follows the binning scheme defined in Section~\ref{sec:pas_mapping}, with 8-way directional bins (\texttt{N, NE, E, SE, S, SW, W, NW}) and distance bins (\texttt{0-0.5mi, 0.5-1mi, 1-2mi, 2mi+}). Given a sampled anchor \(a^*\), we compute the corresponding token \(z = (a^*, B_{\theta}^*, B_r^*)\) using the deterministic post-processing procedure described previously.

At retrieval time, instead of operating on a single estimated user location, we approximate the distribution \(\pi(x \mid z)\) over the uncertainty region \(U(z)\) using Monte Carlo sampling, as defined in Eq.~\ref{eq:monte}. Specifically, we draw \(K=1000\) latent user samples within \(U(z)\) and use these samples to estimate the spatial score \(S_{sp}(t \mid z)\) via the Monte Carlo formulation in Eq.~\ref{eq:monte}.

In implementation, the PAS token also carries auxiliary metadata, including anchor identifiers and sampling probabilities, to ensure consistent filtering during retrieval (e.g., restricting candidate anchors when required). These components collectively enable the retriever to operate over a privacy-preserving representation of user location while remaining compatible with the hybrid scoring framework in Eq.~\ref{eqn:hybrid_score}.

\subsection{Baseline Configuration}

As a point of comparison, we consider a non-private baseline that assumes access to the user’s exact location. The baseline uses the same dense retrievers and identical score fusion formulation, differing only in the absence of PAS perturbation. This ensures that performance differences can be attributed directly to the privacy mechanism.

We evaluate retrieval and generation performance using multiple metrics, including Recall@k, nDCG@k, citation correctness, and F1 score.

\subsection{LLM-Based Generation}

For answer generation, we use multiple large language models (LLMs), namely \texttt{gpt-oss-120b}, \texttt{Meta-Llama-3.3-70B-Instruct}~\cite{touvron2023llama}, 
and \texttt{gemma-3-12b-it}~\cite{gemma2024}. Generation is guided by a structured prompt template shown in Figure~\ref{fig:prompt}, in which a system prompt enforces strict grounding in the retrieved context, and a user prompt provides the query, semantic targets (e.g., entities or tags), spatial intent (direction and radius), and retrieved documents. All models are decoded using low-temperature sampling (\(T = 0.1\)) to reduce variability and improve determinism.

\begin{figure}[t]
\centering
\footnotesize

\begin{minipage}{\columnwidth}
\fbox{
\begin{minipage}{0.98\columnwidth}
\ttfamily

\textbf{System Prompt}\\
You are a careful RAG answering assistant.\\
Answer the user's query using ONLY the retrieved context.\\
Do not use outside knowledge.\\
If the evidence is weak or partially mismatched, say so briefly.\\
Prefer concise answers that directly name the best matching places.\\
Only cite documents that actually support your answer.\\
Return STRICT JSON with exactly these keys:\\
\{\\
\ \ "answer": string,\\
\ \ "citations": [{"title": string, "doc\_id": string}],\\
\ \ "faithfulness\_notes": [string]\\
\}\\
Do not include markdown fences.\\

\vspace{0.5em}

\textbf{User Prompt Template}\\
User query: \{raw\_query\}\\

Semantic target:\\
- entity\_type: \{entity\}\\
- must\_have\_tags: \{must\_have\_tags\}\\

Spatial intent:\\
- direction\_constraint: \{direction\_constraint\}\\
- radius\_miles: \{radius\_miles\}\\

Retrieved context:\\
\{context\_text\}\\

Instructions:\\
1. Write a grounded answer that directly answers the user query.\\
2. Mention strongest matching places first.\\
3. Exclude unsupported claims.\\
4. Mention uncertainty for weak spatial matches.\\
5. Cite only docs used in the answer.\\
6. Output STRICT JSON only.

\end{minipage}
}
\end{minipage}

\caption{Our prompt template used for grounded generation.}
\label{fig:prompt}
\end{figure}

\subsection{Compute and Hardware Details}\label{app:hardware}
We provide details of the computational  resources used to perform our experiments.
All embedding-based retrieval tasks using the retrieval models were conducted on a single NVIDIA A40 GPU. The GPU was sufficient to perform all the retrieval tasks for the dataset scale used in the work. The generator models were not hosted locally, but were accessed via an API-based interface provided by SambaNova\footnote{\url{https://sambanova.ai/}}. All the generation results were obtained through these API calls. This introduced network latency, but it does not affect the correctness of the results.

\newpage
\input{checklist.tex}

\end{document}

%% file: checklist.tex
\section*{NeurIPS Paper Checklist}

\begin{enumerate}

\item {\bf Claims}
    \item[] Question: Do the main claims made in the abstract and introduction accurately reflect the paper's contributions and scope?
    \item[] Answer: \answerYes{}
    \item[] Justification: 
    The abstract clearly states the core contributions of the PAS framework, including, the structure achor-based spatial privacy mechanism, experimental evaluation indicating $\sim$370-400m adversarial localization error with retention of over half of the baseline utility performance and experimental analysis of the non-monotonic privacy-utility relationship. These claims are consistent with the experimental results presented later in the paper.
    \item[] Guidelines:
    \begin{itemize}
        \item The answer \answerNA{} means that the abstract and introduction do not include the claims made in the paper.
        \item The abstract and/or introduction should clearly state the claims made, including the contributions made in the paper and important assumptions and limitations. A \answerNo{} or \answerNA{} answer to this question will not be perceived well by the reviewers. 
        \item The claims made should match theoretical and experimental results, and reflect how much the results can be expected to generalize to other settings. 
        \item It is fine to include aspirational goals as motivation as long as it is clear that these goals are not attained by the paper. 
    \end{itemize}

\item {\bf Limitations}
    \item[] Question: Does the paper discuss the limitations of the work performed by the authors?
    \item[] Answer: \answerYes{}
    \item[] Justification: 
    The paper explicitly indicates the limitations within Section \ref{sec:conclusion} (conclusion section), acknowledging the limitations of the proposed PAS framework. These limitations include: geometric bias due to anchor discretizations, computational overhead and variance due to Monte Carlo sampling, and use of synthetic data. The discussions offer potential implications and future directions.
    \item[] Guidelines:
    \begin{itemize}
        \item The answer \answerNA{} means that the paper has no limitation while the answer \answerNo{} means that the paper has limitations, but those are not discussed in the paper. 
        \item The authors are encouraged to create a separate ``Limitations'' section in their paper.
        \item The paper should point out any strong assumptions and how robust the results are to violations of these assumptions (e.g., independence assumptions, noiseless settings, model well-specification, asymptotic approximations only holding locally). The authors should reflect on how these assumptions might be violated in practice and what the implications would be.
        \item The authors should reflect on the scope of the claims made, e.g., if the approach was only tested on a few datasets or with a few runs. In general, empirical results often depend on implicit assumptions, which should be articulated.
        \item The authors should reflect on the factors that influence the performance of the approach. For example, a facial recognition algorithm may perform poorly when image resolution is low or images are taken in low lighting. Or a speech-to-text system might not be used reliably to provide closed captions for online lectures because it fails to handle technical jargon.
        \item The authors should discuss the computational efficiency of the proposed algorithms and how they scale with dataset size.
        \item If applicable, the authors should discuss possible limitations of their approach to address problems of privacy and fairness.
        \item While the authors might fear that complete honesty about limitations might be used by reviewers as grounds for rejection, a worse outcome might be that reviewers discover limitations that aren't acknowledged in the paper. The authors should use their best judgment and recognize that individual actions in favor of transparency play an important role in developing norms that preserve the integrity of the community. Reviewers will be specifically instructed to not penalize honesty concerning limitations.
    \end{itemize}

\item {\bf Theory assumptions and proofs}
    \item[] Question: For each theoretical result, does the paper provide the full set of assumptions and a complete (and correct) proof?
    \item[] Answer: \answerNo{}
    \item[] Justification: 
    The paper presents a theoretical result (Lemma 1), stating that PAS satisfies $\varepsilon$-geo-indistinguishability. While the paper presents the mechanism, assumptions and an intuitive argument, it only contains a proof sketch. This is because, the proof relies on the standard properties of exponential mechanism and post-processing invariance, and thus, no step-by-step derivation for complete proof.
    \item[] Guidelines:
    \begin{itemize}
        \item The answer \answerNA{} means that the paper does not include theoretical results. 
        \item All the theorems, formulas, and proofs in the paper should be numbered and cross-referenced.
        \item All assumptions should be clearly stated or referenced in the statement of any theorems.
        \item The proofs can either appear in the main paper or the supplemental material, but if they appear in the supplemental material, the authors are encouraged to provide a short proof sketch to provide intuition. 
        \item Inversely, any informal proof provided in the core of the paper should be complemented by formal proofs provided in appendix or supplemental material.
        \item Theorems and Lemmas that the proof relies upon should be properly referenced. 
    \end{itemize}

    \item {\bf Experimental result reproducibility}
    \item[] Question: Does the paper fully disclose all the information needed to reproduce the main experimental results of the paper to the extent that it affects the main claims and/or conclusions of the paper (regardless of whether the code and data are provided or not)?
    \item[] Answer: \answerYes{}
    \item[] Justification: 
    The paper provides detailed information to reproduce the main experimental results. The dataset construction is fully described. The PAS mechanism is fully specified from the probabilistic anchor selection to latent user sampling. The retrieval pipeline is clearly defined including the retrievers and generators used. The evaluation metrics are explicitly defined, and the used hyperparameters are reported.
    \item[] Guidelines:
    \begin{itemize}
        \item The answer \answerNA{} means that the paper does not include experiments.
        \item If the paper includes experiments, a \answerNo{} answer to this question will not be perceived well by the reviewers: Making the paper reproducible is important, regardless of whether the code and data are provided or not.
        \item If the contribution is a dataset and\slash or model, the authors should describe the steps taken to make their results reproducible or verifiable. 
        \item Depending on the contribution, reproducibility can be accomplished in various ways. For example, if the contribution is a novel architecture, describing the architecture fully might suffice, or if the contribution is a specific model and empirical evaluation, it may be necessary to either make it possible for others to replicate the model with the same dataset, or provide access to the model. In general. releasing code and data is often one good way to accomplish this, but reproducibility can also be provided via detailed instructions for how to replicate the results, access to a hosted model (e.g., in the case of a large language model), releasing of a model checkpoint, or other means that are appropriate to the research performed.
        \item While NeurIPS does not require releasing code, the conference does require all submissions to provide some reasonable avenue for reproducibility, which may depend on the nature of the contribution. For example
        \begin{enumerate}
            \item If the contribution is primarily a new algorithm, the paper should make it clear how to reproduce that algorithm.
            \item If the contribution is primarily a new model architecture, the paper should describe the architecture clearly and fully.
            \item If the contribution is a new model (e.g., a large language model), then there should either be a way to access this model for reproducing the results or a way to reproduce the model (e.g., with an open-source dataset or instructions for how to construct the dataset).
            \item We recognize that reproducibility may be tricky in some cases, in which case authors are welcome to describe the particular way they provide for reproducibility. In the case of closed-source models, it may be that access to the model is limited in some way (e.g., to registered users), but it should be possible for other researchers to have some path to reproducing or verifying the results.
        \end{enumerate}
    \end{itemize}

\item {\bf Open access to data and code}
    \item[] Question: Does the paper provide open access to the data and code, with sufficient instructions to faithfully reproduce the main experimental results, as described in supplemental material?
    \item[] Answer: \answerYes{} 
    \item[] Justification: 
    All the resource will be shared publicly via a repository once the paper is accepted for publication. The repository will contain the dataset and implementation codes used during the experiment. Particularly, it will include: the dataset files used for benchmarking, the end-to-end PAS pipeline implementations, the configurations that enable reproduction of the main results and an example illustration of the PAS mechanism at work. Together, these resources will allow a reader to reproduce the reported experimental results.
    \item[] Guidelines:
    \begin{itemize}
        \item The answer \answerNA{} means that paper does not include experiments requiring code.
        \item Please see the NeurIPS code and data submission guidelines (\url{https://neurips.cc/public/guides/CodeSubmissionPolicy}) for more details.
        \item While we encourage the release of code and data, we understand that this might not be possible, so \answerNo{} is an acceptable answer. Papers cannot be rejected simply for not including code, unless this is central to the contribution (e.g., for a new open-source benchmark).
        \item The instructions should contain the exact command and environment needed to run to reproduce the results. See the NeurIPS code and data submission guidelines (\url{https://neurips.cc/public/guides/CodeSubmissionPolicy}) for more details.
        \item The authors should provide instructions on data access and preparation, including how to access the raw data, preprocessed data, intermediate data, and generated data, etc.
        \item The authors should provide scripts to reproduce all experimental results for the new proposed method and baselines. If only a subset of experiments are reproducible, they should state which ones are omitted from the script and why.
        \item At submission time, to preserve anonymity, the authors should release anonymized versions (if applicable).
        \item Providing as much information as possible in supplemental material (appended to the paper) is recommended, but including URLs to data and code is permitted.
    \end{itemize}

\item {\bf Experimental setting/details}
    \item[] Question: Does the paper specify all the training and test details (e.g., data splits, hyperparameters, how they were chosen, type of optimizer) necessary to understand the results?
    \item[] Answer: \answerNo{}
    \item[] Justification: 
    The paper provides significant amount of details about the experimental setup, including dataset construction, retrieval pipeline, PAS mechanism and key parameters. However, the work does not include training a new model, but it uses existing retrievers and generation models. The configuration details will be included in the source code once the work is accepted for publication.
    \item[] Guidelines:
    \begin{itemize}
        \item The answer \answerNA{} means that the paper does not include experiments.
        \item The experimental setting should be presented in the core of the paper to a level of detail that is necessary to appreciate the results and make sense of them.
        \item The full details can be provided either with the code, in appendix, or as supplemental material.
    \end{itemize}

\item {\bf Experiment statistical significance}
    \item[] Question: Does the paper report error bars suitably and correctly defined or other appropriate information about the statistical significance of the experiments?
    \item[] Answer: \answerNo{}
    \item[] Justification: 
    The paper reports mean performance metrics across evaluation queries. It does not include error bars, confidence intervals, or statistical tests such as standard deviation, bootstrap intervals, and hypothesis testing. The experiments are conducted over large evaluation sets (423 queries) and report aggregate metrics that exhibit consistent and substantial differences between the baseline and the PAS. Such effect sizes are significantly larger than typical variances introduced by sampling noise. 
    \item[] Guidelines:
    \begin{itemize}
        \item The answer \answerNA{} means that the paper does not include experiments.
        \item The authors should answer \answerYes{} if the results are accompanied by error bars, confidence intervals, or statistical significance tests, at least for the experiments that support the main claims of the paper.
        \item The factors of variability that the error bars are capturing should be clearly stated (for example, train/test split, initialization, random drawing of some parameter, or overall run with given experimental conditions).
        \item The method for calculating the error bars should be explained (closed form formula, call to a library function, bootstrap, etc.)
        \item The assumptions made should be given (e.g., Normally distributed errors).
        \item It should be clear whether the error bar is the standard deviation or the standard error of the mean.
        \item It is OK to report 1-sigma error bars, but one should state it. The authors should preferably report a 2-sigma error bar than state that they have a 96\% CI, if the hypothesis of Normality of errors is not verified.
        \item For asymmetric distributions, the authors should be careful not to show in tables or figures symmetric error bars that would yield results that are out of range (e.g., negative error rates).
        \item If error bars are reported in tables or plots, the authors should explain in the text how they were calculated and reference the corresponding figures or tables in the text.
    \end{itemize}

\item {\bf Experiments compute resources}
    \item[] Question: For each experiment, does the paper provide sufficient information on the computer resources (type of compute workers, memory, time of execution) needed to reproduce the experiments?
    \item[] Answer: \answerYes{} 
    \item[] Justification: 
    Much as the paper focuses on the methodology, dataset, and evaluation pipelines. It specifies the retrievers and generation models used, it provides detailed information on the computational resources used in Section \ref{app:hardware}.
    \item[] Guidelines:
    \begin{itemize}
        \item The answer \answerNA{} means that the paper does not include experiments.
        \item The paper should indicate the type of compute workers CPU or GPU, internal cluster, or cloud provider, including relevant memory and storage.
        \item The paper should provide the amount of compute required for each of the individual experimental runs as well as estimate the total compute. 
        \item The paper should disclose whether the full research project required more compute than the experiments reported in the paper (e.g., preliminary or failed experiments that didn't make it into the paper). 
    \end{itemize}
    
\item {\bf Code of ethics}
    \item[] Question: Does the research conducted in the paper conform, in every respect, with the NeurIPS Code of Ethics \url{https://neurips.cc/public/EthicsGuidelines}?
    \item[] Answer: \answerYes{}
    \item[] Justification: 
    The research complies with NeurIPS Code of Ethics in all important aspect. The dataset used is synthetic, thus avoids the use of real user location data and thereby eliminating the concerns associated with personal data misuse and privacy violations. The methodology does not involve use of human subjects, sensitive attribute and potentially harmful application. Finally, no deceptive and unsafe practices are employed.
    \item[] Guidelines:
    \begin{itemize}
        \item The answer \answerNA{} means that the authors have not reviewed the NeurIPS Code of Ethics.
        \item If the authors answer \answerNo, they should explain the special circumstances that require a deviation from the Code of Ethics.
        \item The authors should make sure to preserve anonymity (e.g., if there is a special consideration due to laws or regulations in their jurisdiction).
    \end{itemize}

\item {\bf Broader impacts}
    \item[] Question: Does the paper discuss both potential positive societal impacts and negative societal impacts of the work performed?
    \item[] Answer: \answerYes{}
    \item[] Justification: 
    The paper explicitly emphasizes the societal impact of PAS framework, particularly, it enables privacy-preserving retrieval, thus reducing the risks associated with location tracking and data exposure. 
    \item[] Guidelines:
    \begin{itemize}
        \item The answer \answerNA{} means that there is no societal impact of the work performed.
        \item If the authors answer \answerNA{} or \answerNo, they should explain why their work has no societal impact or why the paper does not address societal impact.
        \item Examples of negative societal impacts include potential malicious or unintended uses (e.g., disinformation, generating fake profiles, surveillance), fairness considerations (e.g., deployment of technologies that could make decisions that unfairly impact specific groups), privacy considerations, and security considerations.
        \item The conference expects that many papers will be foundational research and not tied to particular applications, let alone deployments. However, if there is a direct path to any negative applications, the authors should point it out. For example, it is legitimate to point out that an improvement in the quality of generative models could be used to generate Deepfakes for disinformation. On the other hand, it is not needed to point out that a generic algorithm for optimizing neural networks could enable people to train models that generate Deepfakes faster.
        \item The authors should consider possible harms that could arise when the technology is being used as intended and functioning correctly, harms that could arise when the technology is being used as intended but gives incorrect results, and harms following from (intentional or unintentional) misuse of the technology.
        \item If there are negative societal impacts, the authors could also discuss possible mitigation strategies (e.g., gated release of models, providing defenses in addition to attacks, mechanisms for monitoring misuse, mechanisms to monitor how a system learns from feedback over time, improving the efficiency and accessibility of ML).
    \end{itemize}
    
\item {\bf Safeguards}
    \item[] Question: Does the paper describe safeguards that have been put in place for responsible release of data or models that have a high risk for misuse (e.g., pre-trained language models, image generators, or scraped datasets)?
    \item[] Answer: \answerNA{}
    \item[] Justification: 
    The paper does not introduce or release any high-risk models or datasets.
    \item[] Guidelines:
    \begin{itemize}
        \item The answer \answerNA{} means that the paper poses no such risks.
        \item Released models that have a high risk for misuse or dual-use should be released with necessary safeguards to allow for controlled use of the model, for example by requiring that users adhere to usage guidelines or restrictions to access the model or implementing safety filters. 
        \item Datasets that have been scraped from the Internet could pose safety risks. The authors should describe how they avoided releasing unsafe images.
        \item We recognize that providing effective safeguards is challenging, and many papers do not require this, but we encourage authors to take this into account and make a best faith effort.
    \end{itemize}

\item {\bf Licenses for existing assets}
    \item[] Question: Are the creators or original owners of assets (e.g., code, data, models), used in the paper, properly credited and are the license and terms of use explicitly mentioned and properly respected?
    \item[] Answer: \answerYes{}
    \item[] Justification: 
    The paper appropriately credits all assets used in the work. The pretrained models e.g., bge-base, MiniLM, Meta-Llama-3.3-70B, gemma-3-12b-it, and gpt-oss-120b are all identified and cited. The dataset is synthetically generated, avoiding reliance on third party proprietory data and associated license issues. 
    The code has proper attributions to dependencies and libraries.
    No restricted asset is used without acknowledgment. 
    \item[] Guidelines:
    \begin{itemize}
        \item The answer \answerNA{} means that the paper does not use existing assets.
        \item The authors should cite the original paper that produced the code package or dataset.
        \item The authors should state which version of the asset is used and, if possible, include a URL.
        \item The name of the license (e.g., CC-BY 4.0) should be included for each asset.
        \item For scraped data from a particular source (e.g., website), the copyright and terms of service of that source should be provided.
        \item If assets are released, the license, copyright information, and terms of use in the package should be provided. For popular datasets, \url{paperswithcode.com/datasets} has curated licenses for some datasets. Their licensing guide can help determine the license of a dataset.
        \item For existing datasets that are re-packaged, both the original license and the license of the derived asset (if it has changed) should be provided.
        \item If this information is not available online, the authors are encouraged to reach out to the asset's creators.
    \end{itemize}

\item {\bf New assets}
    \item[] Question: Are new assets introduced in the paper well documented and is the documentation provided alongside the assets?
    \item[] Answer: \answerYes{}
    \item[] Justification: 
    The paper introduces new assets such as the synthetic spatial dataset and the PAS implementation. These assets will be released via public repositories once the work is accepted for publication. 
    They will be accompanied by descriptive metadata and schema. An illustrative example will also be added to demonstrate PAS at work. 
    \item[] Guidelines:
    \begin{itemize}
        \item The answer \answerNA{} means that the paper does not release new assets.
        \item Researchers should communicate the details of the dataset\slash code\slash model as part of their submissions via structured templates. This includes details about training, license, limitations, etc. 
        \item The paper should discuss whether and how consent was obtained from people whose asset is used.
        \item At submission time, remember to anonymize your assets (if applicable). You can either create an anonymized URL or include an anonymized zip file.
    \end{itemize}

\item {\bf Crowdsourcing and research with human subjects}
    \item[] Question: For crowdsourcing experiments and research with human subjects, does the paper include the full text of instructions given to participants and screenshots, if applicable, as well as details about compensation (if any)? 
    \item[] Answer: \answerNA{}
    \item[] Justification: 
    The paper does not involve crowdsourcing experiments or human subjects. All the experiments are conducted over a synthetic dataset.
    \item[] Guidelines:
    \begin{itemize}
        \item The answer \answerNA{} means that the paper does not involve crowdsourcing nor research with human subjects.
        \item Including this information in the supplemental material is fine, but if the main contribution of the paper involves human subjects, then as much detail as possible should be included in the main paper. 
        \item According to the NeurIPS Code of Ethics, workers involved in data collection, curation, or other labor should be paid at least the minimum wage in the country of the data collector. 
    \end{itemize}

\item {\bf Institutional review board (IRB) approvals or equivalent for research with human subjects}
    \item[] Question: Does the paper describe potential risks incurred by study participants, whether such risks were disclosed to the subjects, and whether Institutional Review Board (IRB) approvals (or an equivalent approval/review based on the requirements of your country or institution) were obtained?
    \item[] Answer: \answerNA{} 
    \item[] Justification: 
    The paper does not involve human subjects and all the experiments are conducted over a synthetic dataset. Thus, there are no participant risks to disclose and IRB approval is not required.
    \item[] Guidelines:
    \begin{itemize}
        \item The answer \answerNA{} means that the paper does not involve crowdsourcing nor research with human subjects.
        \item Depending on the country in which research is conducted, IRB approval (or equivalent) may be required for any human subjects research. If you obtained IRB approval, you should clearly state this in the paper. 
        \item We recognize that the procedures for this may vary significantly between institutions and locations, and we expect authors to adhere to the NeurIPS Code of Ethics and the guidelines for their institution. 
        \item For initial submissions, do not include any information that would break anonymity (if applicable), such as the institution conducting the review.
    \end{itemize}

\item {\bf Declaration of LLM usage}
    \item[] Question: Does the paper describe the usage of LLMs if it is an important, original, or non-standard component of the core methods in this research? Note that if the LLM is used only for writing, editing, or formatting purposes and does \emph{not} impact the core methodology, scientific rigor, or originality of the research, declaration is not required.
    \item[] Answer: \answerYes{}
    \item[] Justification: 
    The paper clearly describes the role of LLMs as part of the evaluation pipeline, where they are used for answer generation in the Spatial RAG setting. Additionally, the paper acknowledges that ChatGPT was used during dataset construction, specifically to assist in generating components of the synthetic queries and metadata. This usage does not affect the PAS methodology but it is important for transparency.
    \item[] Guidelines:
    \begin{itemize}
        \item The answer \answerNA{} means that the core method development in this research does not involve LLMs as any important, original, or non-standard components.
        \item Please refer to our LLM policy in the NeurIPS handbook for what should or should not be described.
    \end{itemize}

\end{enumerate}